\documentclass[a4paper]{article}
\usepackage{lmodern}
\pdfoutput=1
\usepackage{graphicx,wrapfig}
\usepackage{lipsum}
\usepackage{amsmath,amssymb,mathrsfs}
\usepackage{hyperref}
\usepackage{url}
\usepackage{mathtools}
\usepackage{changepage}
\usepackage[ruled,vlined]{algorithm2e}
\usepackage{amsmath}
\usepackage{multirow}
\usepackage{wrapfig}
\usepackage{subfig}
\usepackage{color}
\usepackage{epsfig,enumerate,amsmath,amsfonts,amssymb,amsthm,mathrsfs,ifpdf}
\usepackage{indentfirst,relsize}
\usepackage{setspace,graphicx}
\usepackage{latexsym}
\usepackage[all]{xy}
\usepackage[dvipsnames]{pstricks}
\usepackage{pst-grad} 
\usepackage{pst-plot} 
\usepackage[margin = 2.50cm]{geometry}

\usepackage{authblk} 
\usepackage{algpseudocode}
\usepackage{color,soul}
\usepackage{enumitem}

\usepackage{orcidlink}
\usepackage{tabularx}

\newcommand{\remove}[1]{}

\newtheorem{theorem}{Theorem}
\newtheorem{lemma}[theorem]{Lemma}

\usepackage{todonotes}
\usepackage[most]{tcolorbox}

\title{Roman-Type Domination on Convex and Chordal Bipartite Graphs: Algorithms and Hardness}

\author[1]{Gautam K. Das\footnote{gkd@iitg.ac.in}}
\author[1]{Kamal Santra \orcidlink{0009-0006-5997-1452} \footnote{kamal.7.2013@gmail.com, kamal.santra@iitg.ac.in}}
\affil[1]{Department of Mathematics\\
	
	Indian Institute of Technology Guwahati\\
	
	Guwahati, 781039, Assam, India}

\date{}

\begin{document}

	\maketitle
\begin{abstract}
	Roman domination and its variants form an important family of domination-type graph parameters motivated by protection, fault tolerance, and resource allocation. A Roman dominating function of a graph \(G\) is a function \(f:V(G)\rightarrow\{0,1,2\}\) such that every vertex \(v\) with \(f(v)=0\) has a neighbour \(u\) with \(f(u)=2\). The weight of \(f\) is \(w(f)=\sum_{v\in V(G)}f(v)\), and the minimum weight of a Roman dominating function of \(G\) is the Roman domination number, denoted by \(\gamma_R(G)\). In this paper, we study four variants of Roman domination on two natural subclasses of bipartite graphs, namely convex bipartite graphs and chordal bipartite graphs. On the positive side, we develop a unified left-to-right dynamic programming framework for Roman-\(\{2\}\) domination, double Roman domination, perfect Roman domination, and unique response Roman domination on convex bipartite graphs. The algorithms exploit the interval structure of one bipartition class and represent all unfinished requirements using a constant number of boundary indices. Consequently, each of the four parameters can be computed in \(O(n^6)\) time, where \(n=|V(G)|\). On the negative side, we prove that Roman-\(\{2\}\) domination, perfect Roman domination, and unique response Roman domination remain NP-complete on chordal bipartite graphs. These results establish a clear algorithmic separation between convex bipartite graphs, where the interval ordering yields polynomial-time solvability, and the broader class of chordal bipartite graphs, where several Roman-type domination problems remain computationally intractable.
\end{abstract}

	{\bf Keywords.}
Roman domination, Roman-\(\{2\}\) domination, Double Roman domination; Perfect Roman domination, Unique response Roman domination, Convex bipartite graph, Chordal bipartite graph, Dynamic programming, NP-completeness

\section{Introduction}\label{sec:introduction}

Domination is one of the central and most extensively studied topics in
graph theory and graph algorithms. A set \(D\subseteq V(G)\) is a
dominating set of a graph \(G\) if every vertex in \(V(G)\setminus D\) has
a neighbour in \(D\). Although the definition is simple, domination and its
numerous variants provide natural models for facility location, network
monitoring, communication systems, resource allocation, and protection
problems. We refer to the standard monographs
\cite{haynes2020topics,haynes1998advanced,haynes1998fundamentals} for a
comprehensive treatment of domination in graphs.

Roman domination is a weighted form of domination motivated by the problem
of deploying a limited number of military units to protect the regions of
the Roman Empire. The historical motivation was popularized by
Stewart~\cite{stewart1999defend} and discussed mathematically by ReVelle and
Rosing~\cite{revelle2000defendens}. The graph-theoretic notion was formally
introduced by Cockayne et al.~\cite{cockayne2004roman}. A Roman dominating
function, abbreviated as an RDF, of a graph \(G\) is a function
\(f:V(G)\rightarrow\{0,1,2\}\) such that every vertex assigned value \(0\)
has a neighbour assigned value \(2\). The weight of \(f\) is
\(w(f)=\sum_{v\in V(G)}f(v)\), and the minimum possible weight is the Roman
domination number \(\gamma_R(G)\).

The value \(2\) may be interpreted as placing two units of defence at a
vertex. One unit protects the vertex itself, while the second unit can be
moved to an adjacent undefended vertex when necessary. A vertex assigned
value \(1\) has enough defence only for itself, whereas a vertex assigned
value \(0\) must rely on a neighbouring vertex assigned value \(2\). Since
the introduction of the parameter, Roman domination has been studied from
structural, extremal, algorithmic, and complexity-theoretic viewpoints; see,
for example,
\cite{chambers2009extremal,liedloff2008efficient,liu2013roman,
	padamutham2020algorithmic}. Many variants have also been introduced by
changing the permitted values or strengthening the local protection
requirements.

In this paper, we consider four such variants: Roman-\(\{2\}\) domination,
double Roman domination, perfect Roman domination, and unique response Roman
domination. Although their local feasibility conditions are different, we
show that all four problems can be treated by a common left-to-right dynamic
programming framework on convex bipartite graphs.

\subsection{The Roman-type domination parameters}

\paragraph*{\textbf{Roman-\texorpdfstring{\(\{2\}\)} domination.}}

Roman-\(\{2\}\) domination was introduced by Chellali et
al.~\cite{chellali2016roman2}. A Roman-\(\{2\}\) dominating function of a
graph \(G\) is a function \(f:V(G)\rightarrow\{0,1,2\}\) such that every
vertex \(v\) with \(f(v)=0\) satisfies
\(\sum_{u\in N(v)}f(u)\geq 2\). Thus, a vertex assigned value \(0\) may be
protected either by one neighbour assigned value \(2\), or by at least two
neighbours assigned value \(1\). The minimum weight of such a function is
the Roman-\(\{2\}\) domination number, denoted by
\(\gamma_{\{R2\}}(G)\).

This parameter is also commonly studied under the name \emph{Italian
	domination}. In contrast with ordinary Roman domination, two vertices
carrying one unit of defence each may cooperate to protect an undefended
vertex. Fundamental structural properties were established by Chellali et
al.~\cite{chellali2016roman2}, while Italian domination on trees was studied
by Henning and Klostermeyer~\cite{henning2017italian}. Complexity and
algorithmic questions for Roman-\(\{2\}\) domination were subsequently
investigated by Chen and Lu~\cite{chen2019roman2} and by Padamutham and
Palagiri~\cite{padamutham2020complexity}. Despite this progress, its
complexity remained unsettled on several structured bipartite graph classes.

\paragraph*{\textbf{Double Roman domination.}}

Double Roman domination was introduced by Beeler, Haynes, and
Hedetniemi~\cite{beeler2016doubleRoman}. A double Roman dominating function
of \(G\) is a function \(f:V(G)\rightarrow\{0,1,2,3\}\) such that every
vertex assigned value \(0\) has either a neighbour assigned value \(3\), or
at least two neighbours assigned value \(2\), and every vertex assigned
value \(1\) has a neighbour assigned value \(2\) or \(3\). The minimum
weight of such a function is the double Roman domination number
\(\gamma_{dR}(G)\).

Double Roman domination models a stronger defence requirement than ordinary
Roman domination. A vertex assigned value \(3\) can provide the complete
defence required by an adjacent zero vertex, while two neighbours assigned
value \(2\) may jointly provide the same protection. Beeler et
al.~\cite{beeler2016doubleRoman} proved the useful fact that every graph has
a minimum double Roman dominating function in which no vertex is assigned
value \(1\). Therefore, when designing an optimization algorithm, it is
sufficient to consider the values \(0\), \(2\), and \(3\). Further
structural, complexity, and algorithmic results were obtained by Banerjee et al.~\cite{banerjee2020doubleRoman}, Yue et
al.~\cite{yue2018doubleRoman}, and Poureidi~\cite{poureidi2022doubleRoman}.

\paragraph*{\textbf{Perfect Roman domination.}}

A perfect Roman dominating function of a graph \(G\) is a function
\(f:V(G)\rightarrow\{0,1,2\}\) such that every vertex assigned value \(0\)
has exactly one neighbour assigned value \(2\). The minimum weight of such
a function is the perfect Roman domination number, denoted by
\(\gamma_R^p(G)\). Thus, perfect Roman domination strengthens ordinary Roman
domination by replacing the requirement of at least one value-\(2\)
neighbour with an exact-one requirement.

Perfect Roman domination was studied on trees by Henning et al.~\cite{henning2018perfectRomanTrees}. Banerjee et al.~\cite{banerjee2019perfectRoman} later developed a broader
algorithmic and complexity study of the problem and established results on
several restricted graph classes. The exactness requirement makes the
problem algorithmically different from ordinary Roman domination. Once a
zero vertex has acquired a value-\(2\) neighbour, every subsequent choice
must ensure that it does not acquire a second one. Consequently, a dynamic
program must retain information about both undominated zero vertices and
zero vertices whose protection has already been completed.

\paragraph*{\textbf{Unique response Roman domination.}}

The unique response form of Roman domination originates in the Roman
influence parameters introduced by Rubalcaba and
Slater~\cite{rubalcaba2007romanInfluence}. It was subsequently studied by
Targhi et al.~\cite{targhi2011uniqueResponse}. A
unique response Roman dominating function, abbreviated as a URRDF, is a
Roman dominating function \(f:V(G)\rightarrow\{0,1,2\}\) satisfying the
following additional conditions: every vertex assigned value \(0\) has
exactly one neighbour assigned value \(2\), and every vertex assigned value
\(1\) or \(2\) has no neighbour assigned value \(2\). The minimum weight of
a URRDF is the unique response Roman domination number \(u_R(G)\).

The first condition guarantees that every undefended vertex has a unique
possible responder. The second condition prevents a vertex assigned value
\(2\) from being adjacent to any positive vertex, including another
value-\(2\) vertex. Hence, the set of value-\(2\) vertices forms a highly
restricted packing. Relations between Roman domination and the unique response
Roman domination were studied by Targhi et
al.~\cite{targhi2011uniqueResponse}, while trees satisfying
\(\gamma_R(G)=u_R(G)\) were investigated by Jafari Rad and
Liu~\cite{rad2012uniqueTrees}. Zhao et al.~\cite{zhao2018uniqueTrees}
provided further results for trees and proved hardness on triangle-free
graphs.

A detailed algorithmic study of unique response Roman domination was
recently given by Banerjee et al.~\cite{banerjee2023unique}. They proved that the decision problem is
NP-complete on chordal graphs and on bipartite graphs, established strong
inapproximability results, and gave polynomial-time algorithms for
distance-hereditary graphs, interval graphs, and proper interval graphs. In
particular, they obtained a linear-time algorithm for distance-hereditary
graphs, a polynomial-time algorithm for interval graphs, and a linear-time
algorithm for proper interval graphs.

\subsection{Structured bipartite graph classes and motivation}

We study the four parameters on convex bipartite graphs and chordal
bipartite graphs. A bipartite graph \(G=(X\cup Y,E)\) is convex with respect
to \(X\) if there is a linear ordering
\(X=\{x_1,\ldots,x_m\}\) such that the neighbourhood of every vertex of
\(Y\) is an interval in this ordering. Thus, for every \(y\in Y\), there
exist indices \(l(y)\) and \(r(y)\) such that
\(N(y)=\{x_{l(y)},\ldots,x_{r(y)}\}\). The interval representation gives a
strong left-to-right ordering that is well-suited to dynamic programming.

A bipartite graph is chordal bipartite if it contains no induced cycle of
length at least \(6\). Convex bipartite graphs form a structured subclass
of chordal bipartite graphs, but a general chordal bipartite graph need not
admit a convex ordering. Consequently, the two classes provide a natural
setting for studying the effect of an interval ordering on the complexity of
Roman-type domination. Standard structural information about these and
related graph classes can be found in
\cite{brandstadt1999graphclasses}.

For ordinary Roman domination, polynomial-time algorithms are known on
several graph classes possessing suitable ordering or decomposition
properties
\cite{liedloff2008efficient,liu2013roman,padamutham2020algorithmic}.
More recently, Das et al.~\cite{das2026roman} studied Roman
domination specifically on subclasses of bipartite graphs. They gave an
\(O(n^3)\)-time algorithm for convex bipartite graphs and proved that the
problem is NP-complete on chordal bipartite graphs. These results reveal a
sharp algorithmic contrast between the interval structure of convex
bipartite graphs and the weaker induced-cycle restriction defining chordal
bipartite graphs. The present paper investigates whether the same contrast
persists for stronger Roman-type parameters.

The complexity table of Banerjee et
al.~\cite{banerjee2023unique} provides additional motivation for our study.
Their table records Roman-\(\{2\}\) domination, perfect Roman domination,
and unique response Roman domination as open on chordal bipartite graphs.
Double Roman domination was already known to be NP-complete on chordal
bipartite graphs~\cite{banerjee2020doubleRoman}. The same table records
Roman-\(\{2\}\), double Roman, perfect Roman, and unique response Roman
domination as open on bipartite permutation graphs. Since every bipartite
permutation graph is biconvex and hence convex bipartite, a polynomial-time
algorithm for convex bipartite graphs settles all four of these open entries
in the stronger convex-bipartite setting.

It is important to distinguish these statements carefully. Convex bipartite
graphs were not listed as a separate row in the table of Banerjee et
al.~\cite{banerjee2023unique}. However, their open cases for bipartite
permutation graphs are covered by our more general convex-bipartite
algorithms. On the hardness side, their table explicitly left the
Roman-\(\{2\}\), perfect Roman, and unique response Roman problems open on
chordal bipartite graphs, and these are precisely the three chordal
bipartite cases settled in this paper.

\subsection{Our Contributions}\label{subsec:our-contributions}

Our first contribution is a unified dynamic programming framework for
Roman-\(\{2\}\) domination, double Roman domination, perfect Roman
domination, and unique response Roman domination on convex bipartite graphs.
Let \(G=(X\cup Y,E)\) be convex with respect to the ordering
\(X=\{x_1,\ldots,x_m\}\). For every \(y\in Y\), write
\(N(y)=\{x_{l(y)},\ldots,x_{r(y)}\}\), and define
\(Y_i=\{y\in Y:r(y)=i\}\). The algorithm processes the graph in the order
\(x_1,Y_1,x_2,Y_2,\ldots,x_m,Y_m\).

When a vertex \(y\in Y_i\) is processed, all its neighbours have already
been processed, and its neighbourhood
\(N(y)=\{x_{l(y)},\ldots,x_i\}\) is a suffix of the currently processed
part of \(X\). This suffix property allows the unfinished requirements of
the processed vertices to be compressed into a constant number of boundary
indices. The interpretation of the indices depends on the variant. For
Roman-\(\{2\}\) and double Roman domination, the state records unresolved
deficits of processed zero vertices. For a perfect Roman and a unique response
Roman domination, it records active vertices that still need a value-\(2\)
neighbour together with the vertices that must not be reached by another
value-\(2\) vertex.

Although the four state spaces are different, they all contain five
coordinates, each having \(O(n)\) possible values. Every vertex is processed
once, and every transition takes constant time. Consequently, each of the
four parameters can be computed in \(O(n^6)\) time and \(O(n^5)\) space on
an \(n\)-vertex convex bipartite graph. Since biconvex bipartite graphs and
bipartite permutation graphs are subclasses of convex bipartite graphs,
these algorithms also establish polynomial-time solvability on both of
these subclasses.

Our second contribution is a collection of hardness results for chordal
bipartite graphs. We prove that Roman-\(\{2\}\) domination is NP-complete on
chordal bipartite graphs by a reduction from \textsc{Dominating Set} on
chordal bipartite graphs, which is known to be NP-complete
\cite{muller1987np}. We prove that unique response Roman domination is
NP-complete on chordal bipartite graphs by a reduction based on
\textsc{Efficient Domination}; the required hardness result for efficient
domination follows from the work of Lu and Tang~\cite{luTang2002}. Finally,
we establish the NP-completeness of perfect Roman domination on chordal
bipartite graphs using a separate edge gadget that enforces the independence
condition required by the reduction.

These results settle the chordal-bipartite entries that were explicitly
listed as open for Roman-\(\{2\}\), perfect Roman, and unique response Roman
domination in the complexity table of Banerjee et
al.~\cite{banerjee2023unique}. Together with the existing hardness of double
Roman domination on chordal bipartite graphs completes the complexity
classification of the four variants considered here on chordal bipartite
graphs. At the same time, the convex-bipartite algorithms settle the open
bipartite-permutation cases recorded in the same table.

Therefore, our results establish a clear tractability–hardness contrast.
The interval ordering of convex bipartite graphs permits the unfinished
local conditions to be represented by compact boundary information, leading
to polynomial-time algorithms. By contrast, excluding induced cycles of
length at least \(6\) does not provide enough structure to make the same
problems tractable on chordal bipartite graphs.

The remainder of the paper is organized as follows.
Section~\ref{sec:preliminaries} introduces the necessary definitions and
notation. Section~\ref{sec:convex-roman-type} presents the unified dynamic
programming framework for the four Roman-type domination problems on convex
bipartite graphs. Section~\ref{sec:chordal-bipartite-hardness} establishes
the NP-completeness results for Roman-\(\{2\}\), perfect Roman, and
unique response Roman domination on chordal bipartite graphs.
Section~\ref{sec:conclusion} concludes the paper and discusses directions
for further research.

\section{Preliminaries}\label{sec:preliminaries}

In this section, we recall the graph-theoretic terminology and the Roman-type domination parameters used throughout the paper. All graphs considered in this paper are finite, simple, and undirected. For a graph \(G\), we denote its vertex set and edge set by \(V(G)\) and \(E(G)\), respectively. For a vertex \(v\in V(G)\), the open neighborhood of \(v\) is \(N_G(v)=\{u\in V(G):uv\in E(G)\}\), and the closed neighborhood is \(N_G[v]=N_G(v)\cup\{v\}\). When the graph is clear from the context, we simply write \(N(v)\) and \(N[v]\).

A set \(D\subseteq V(G)\) is a dominating set of \(G\) if every vertex in \(V(G)\setminus D\) has a neighbour in \(D\). The minimum cardinality of a dominating set of \(G\) is the domination number of \(G\), denoted by \(\gamma(G)\). A set \(D\subseteq V(G)\) is an efficient dominating set if every vertex \(v\in V(G)\) satisfies \(|N_G[v]\cap D|=1\).

A Roman dominating function of \(G\) is a function \(f:V(G)\rightarrow\{0,1,2\}\) such that every vertex \(v\) with \(f(v)=0\) has at least one neighbor \(u\) with \(f(u)=2\). The weight of \(f\) is \(w(f)=\sum_{v\in V(G)}f(v)\). The Roman domination number of \(G\), denoted by \(\gamma_R(G)\), is the minimum weight of a Roman dominating function of \(G\).

A Roman-\(\{2\}\) dominating function of \(G\) is a function \(f:V(G)\rightarrow\{0,1,2\}\) such that every vertex \(v\) with \(f(v)=0\) satisfies \(\sum_{u\in N(v)}f(u)\geq 2\). The Roman-\(\{2\}\) domination number of \(G\), denoted by \(\gamma_{\{R2\}}(G)\), is the minimum weight of such a function.

A double Roman dominating function of \(G\) is a function \(f:V(G)\rightarrow\{0,1,2,3\}\) such that every vertex assigned value \(0\) has either a neighbour assigned value \(3\), or at least two neighbours assigned value \(2\), and every vertex assigned value \(1\) has a neighbour assigned value \(2\) or \(3\). The double Roman domination number of \(G\), denoted by \(\gamma_{dR}(G)\), is the minimum weight of a double Roman dominating function. We use the standard fact that every graph has a minimum double Roman dominating function using no vertex of value \(1\).

A perfect Roman dominating function of \(G\) is a function \(f:V(G)\rightarrow\{0,1,2\}\) such that every vertex assigned value \(0\) has exactly one neighbour assigned value \(2\). The perfect Roman domination number of \(G\), denoted by \(\gamma_R^p(G)\), is the minimum weight of a perfect Roman dominating function.

A unique response Roman dominating function of \(G\) is a Roman dominating function \(f:V(G)\rightarrow\{0,1,2\}\) with the following additional restrictions: every vertex assigned value \(0\) has exactly one neighbour assigned value \(2\), and every vertex assigned value \(1\) or \(2\) has no neighbor assigned value \(2\). The unique response Roman domination number of \(G\) is denoted by \(u_R(G)\).

A bipartite graph \(G=(X\cup Y,E)\) is convex with respect to \(X\) if there exists a linear ordering \(X=\{x_1,x_2,\ldots,x_m\}\) such that, for every \(y\in Y\), the neighborhood of \(y\) is an interval in this ordering. Thus, for every \(y\in Y\), there exist integers \(l(y)\) and \(r(y)\), with \(1\leq l(y)\leq r(y)\leq m\), such that \(N(y)=\{x_{l(y)},x_{l(y)+1},\ldots,x_{r(y)}\}\). Throughout the algorithmic section, we assume that such a convex ordering and the interval endpoints \(l(y)\) and \(r(y)\) are given. For each \(i\in\{1,\ldots,m\}\), we write \(Y_i=\{y\in Y:r(y)=i\}\).

A bipartite graph is chordal bipartite if it has no induced cycle of length at least six. Equivalently, every induced cycle in a chordal bipartite graph has length four. Chordal bipartite graphs strictly generalize several structured bipartite graph classes, but they do not necessarily admit the interval ordering used in convex bipartite graphs. This distinction is important in this paper: the interval structure of convex bipartite graphs leads to polynomial-time algorithms, while the broader chordal bipartite class remains hard for several Roman-type domination problems.

For each Roman-type parameter considered in this paper, the corresponding decision problem asks whether the graph admits a feasible function of weight at most a given integer \(K\). Since a proposed function can be checked in polynomial time, all these decision problems belong to \(\mathsf{NP}\).

\section{A Unified Dynamic Programming Framework for Roman-Type Domination in Convex Bipartite Graphs}\label{sec:convex-roman-type}

In this section, we show that the left-to-right dynamic programming method used for ordinary Roman domination on convex bipartite graphs can be extended to four related Roman-type domination parameters. The four parameters considered here are Roman-\(\{2\}\) domination, double Roman domination, perfect Roman domination, and unique response Roman domination. Although their local feasibility conditions are different, all four algorithms use the same structural principle: the graph is processed from left to right along a convex ordering of one bipartition class, and the unfinished requirements of already processed vertices are represented by a constant number of boundary indices.

\subsection{Convex Ordering and Processing Order}

Let \(G=(X\cup Y,E)\) be a bipartite graph, where \(X=\{x_1,x_2,\ldots,x_m\}\) is given in a fixed linear order. We assume that \(G\) is convex with respect to \(X\). Thus, for every vertex \(y\in Y\), the neighbourhood of \(y\) is an interval in the ordering of \(X\), and we write \(N(y)=\{x_{l(y)},x_{l(y)+1},\ldots,x_{r(y)}\}\). For each \(i\in\{1,\ldots,m\}\), let \(Y_i=\{y\in Y:r(y)=i\}\).

The vertices are processed in the order \(x_1,Y_1,x_2,Y_2,\ldots,x_m,Y_m\). Inside each set \(Y_i\), the vertices may be processed in an arbitrary but fixed order. When a vertex \(y\in Y_i\) is processed, all its neighbours have already been processed, and its neighbourhood is \(N(y)=\{x_{l(y)},\ldots,x_i\}\). Therefore, the neighbourhood of \(y\) is a suffix of the currently processed part \(\{x_1,\ldots,x_i\}\) of \(X\). This suffix property is the central structural reason why the unfinished requirements of processed vertices of \(X\) can be compressed into a small number of indices.

For a fixed algorithm, the table \(D\) stores the minimum weight of a feasible partial assignment for every state that can occur after the part of the graph processed so far. We use \(p_\infty=m+1\) as a special state value, meaning that the corresponding vertex does not exist. The symbol \(+\infty\), on the other hand, denotes an infeasible table entry. A state \(\sigma\) is called \emph{finite} if \(D(\sigma)<+\infty\). Whenever a new vertex is processed, a temporary table \(D'\) is initialized with the value \(+\infty\) in every entry. All feasible transitions from the current table \(D\) are written into \(D'\), and then \(D'\) replaces \(D\). The instruction ``update \(D'(\sigma')\) with \(w\)'' means that \(D'(\sigma')\) is replaced by \(\min\{D'(\sigma'),w\}\).

For every variant, all already processed vertices of \(Y\) must completely satisfy their domination condition. Indeed, when \(y\in Y_i\) is processed, all vertices of \(N(y)\) have already been assigned values, and no future vertex can change the contribution received by \(y\). In contrast, a processed vertex of \(X\) assigned value \(0\) may still receive a contribution from a future vertex of \(Y\). Consequently, the states store unfinished information only for processed vertices of \(X\).

\subsection{A Common Deficit Representation}

Roman-\(\{2\}\) domination and double Roman domination use the same form of unfinished information. In both cases, a processed vertex \(x_t\in X\) assigned value \(0\) may still need either two units or one unit of contribution from future vertices of \(Y\). We say that \(x_t\) has \emph{deficit \(2\)} if it has not yet received any useful contribution, and that it has \emph{deficit \(1\)} if it has received one useful unit but is not yet safe.

The state stores two indices \(a\) and \(b\). The value \(a\) is the smallest index of a processed vertex of \(X\) having deficit \(2\); if no such vertex exists, then \(a=p_\infty\). If \(a\neq p_\infty\), then \(b\) is the smallest index of a deficit-\(1\) vertex lying strictly before \(a\); if no such vertex exists, then \(b=p_\infty\). Deficit-\(1\) vertices at or after \(a\) do not need to be recorded separately. If \(a=p_\infty\), then \(b\) is simply the smallest index of any deficit-\(1\) vertex, or \(p_\infty\) if no such vertex exists.

The reason this compressed pair is sufficient is the suffix property. Suppose that \(x_a\) is the earliest deficit-\(2\) vertex. Every future vertex of \(Y\) that reaches \(x_a\) also reaches every later processed vertex of \(X\). Hence, as long as \(x_a\) remains unfinished, the detailed status of later deficient vertices does not affect any future feasibility decision. The only deficit-\(1\) vertex that may behave differently is one lying before \(a\), and it is enough to store the earliest such vertex in \(b\). When no deficit-\(2\) vertex remains, the earliest deficit-\(1\) vertex alone determines whether a future suffix update clears all remaining deficits.

We use two constant-time update operations on the pair \((a,b)\). The operation \(\operatorname{Reduce}(a,b,l)\) is applied when a processed vertex \(y\in Y_i\) gives one useful unit to every unfinished vertex in the suffix \(\{x_l,\ldots,x_i\}\). The operation \(\operatorname{Clear}(a,b,l)\) is applied when \(y\) gives enough contribution to make every unfinished vertex in that suffix safe.

For \(\operatorname{Reduce}(a,b,l)\), the following cases apply.
\begin{enumerate}
	\item If \(a=p_\infty\), then only deficit-\(1\) vertices may remain. If \(b<l\), the earliest such vertex is outside the suffix and the new pair is \((p_\infty,b)\). If \(b\geq l\), every deficit-\(1\) vertex lies in the suffix and becomes safe, so the new pair is \((p_\infty,p_\infty)\).
	\item If \(a\neq p_\infty\) and \(a<l\), the earliest deficit-\(2\) vertex is outside the suffix. It remains the controlling unfinished vertex, and the new pair is \((a,b)\).
	\item If \(a\neq p_\infty\) and \(a\geq l\), every deficit-\(2\) vertex lies in the suffix and is reduced to deficit \(1\). If \(b<l\), the previously stored deficit-\(1\) vertex remains the earliest one, so the new pair is \((p_\infty,b)\). Otherwise, the vertex \(x_a\) becomes the earliest remaining deficit-\(1\) vertex, and the new pair is \((p_\infty,a)\).
\end{enumerate}

For \(\operatorname{Clear}(a,b,l)\), the following cases apply.
\begin{enumerate}
	\item If \(a=p_\infty\), then the new pair is \((p_\infty,b)\) when \(b<l\), and \((p_\infty,p_\infty)\) otherwise.
	\item If \(a\neq p_\infty\) and \(a<l\), the earliest deficit-\(2\) vertex is not reached, and the new pair remains \((a,b)\).
	\item If \(a\neq p_\infty\) and \(a\geq l\), every deficit-\(2\) vertex is cleared. If \(b<l\), the stored deficit-\(1\) vertex remains unfinished and the new pair is \((p_\infty,b)\). Otherwise, every unfinished vertex is cleared, and the new pair is \((p_\infty,p_\infty)\).
\end{enumerate}

Thus, \(\operatorname{Reduce}\) lowers by one the unfinished requirements reached by the current suffix, whereas \(\operatorname{Clear}\) removes all unfinished requirements reached by that suffix.

\subsection{Roman-\texorpdfstring{\(\{2\}\)}{2} Domination}

A Roman-\(\{2\}\) dominating function of \(G\) is a function \(f:V(G)\rightarrow\{0,1,2\}\) such that every vertex \(v\) with \(f(v)=0\) satisfies \(\sum_{u\in N(v)}f(u)\geq 2\). The weight of \(f\) is \(w(f)=\sum_{v\in V(G)}f(v)\), and the Roman-\(\{2\}\) domination number is denoted by \(\gamma_{\{R2\}}(G)\).

A zero vertex can be protected either by one neighbour of value \(2\), or by at least two neighbours of value \(1\). Therefore, when a vertex \(y\in Y_i\) is processed, the algorithm must know whether its interval contains a value-\(2\) vertex of \(X\), or at least two value-\(1\) vertices of \(X\).

\paragraph*{State.}
A state is a tuple \((r,u,v,a,b)\). The coordinate \(r\) is the largest index of a processed vertex of \(X\) assigned value \(2\); if no such vertex exists, then \(r=0\). The coordinates \(u\) and \(v\) are respectively the largest and second largest indices of processed vertices of \(X\) assigned value \(1\); a missing index is represented by \(0\). The pair \((a,b)\) stores the deficit information defined above.

For a finite state \((r,u,v,a,b)\), the table value \(D(r,u,v,a,b)\) is the minimum weight of a feasible partial Roman-\(\{2\}\) assignment realizing these five boundary values. Feasibility means that every processed vertex of \(Y\) assigned value \(0\) already receives a total neighbour value at least \(2\), while every unfinished processed vertex of \(X\) is represented by \((a,b)\).

The initial state is \((0,0,0,p_\infty,p_\infty)\) with table value \(0\). A final state is accepting exactly when \(a=b=p_\infty\), because then every processed zero vertex of \(X\) has received total contribution at least \(2\).

\paragraph*{Processing a vertex of \texorpdfstring{\(X\)}.}
Suppose that \(x_i\) is processed from a finite state \((r,u,v,a,b)\).
\begin{enumerate}
	\item If \(f(x_i)=0\), then \(x_i\) initially has deficit \(2\), since no previously processed vertex of \(Y\) is adjacent to \(x_i\). If \(a=p_\infty\), set \(a_0=i\); otherwise set \(a_0=a\). The new state is \((r,u,v,a_0,b)\), with no increase in weight.
	\item If \(f(x_i)=1\), then \(x_i\) is safe and becomes the newest value-\(1\) vertex of \(X\). The new state is \((r,i,u,a,b)\), and the weight increases by \(1\).
	\item If \(f(x_i)=2\), then \(x_i\) is safe and becomes the newest value-\(2\) vertex of \(X\). The new state is \((i,u,v,a,b)\), and the weight increases by \(2\).
\end{enumerate}

\paragraph*{Processing a vertex of \texorpdfstring{\(Y\)}.}
Let \(y\in Y_i\), and write \(l=l(y)\). Then \(N(y)=\{x_l,\ldots,x_i\}\).
\begin{enumerate}
	\item If \(f(y)=0\), then \(y\) must receive a total neighbour value at least \(2\). The interval contains a value-\(2\) vertex exactly when \(r\geq l\), and it contains at least two value-\(1\) vertices exactly when \(v\geq l\). Hence, the transition is feasible exactly when \(r\geq l\) or \(v\geq l\). In this case, the state and weight do not change.
	\item If \(f(y)=1\), then \(y\) contributes one unit to every unfinished zero vertex of \(X\) in its interval. The deficit pair becomes \(\operatorname{Reduce}(a,b,l)\), and the weight increases by \(1\).
	\item If \(f(y)=2\), then every unfinished zero vertex of \(X\) in its interval receives enough contribution to become safe. The deficit pair becomes \(\operatorname{Clear}(a,b,l)\), and the weight increases by \(2\).
\end{enumerate}

\subsection{Double Roman Domination}

A double Roman dominating function of \(G\) is a function \(f:V(G)\rightarrow\{0,1,2,3\}\) such that every vertex assigned value \(0\) has either a neighbour assigned value \(3\), or at least two neighbours assigned value \(2\), and every vertex assigned value \(1\) has a neighbour assigned value \(2\) or \(3\). The double Roman domination number is denoted by \(\gamma_{dR}(G)\).

We use the standard observation that every graph has a minimum double Roman dominating function in which no vertex is assigned value \(1\). Indeed, if \(f(v)=1\) and \(v\) has a neighbour \(u\) of value \(3\), then changing \(v\) from \(1\) to \(0\) decreases the weight. If \(f(v)=1\) and \(v\) has a neighbour \(u\) of value \(2\), then changing \(v\) from \(1\) to \(0\) and \(u\) from \(2\) to \(3\) preserves the weight and decreases the number of vertices assigned value \(1\). Thus, among the minimum solutions, one may assume that only the values \(0\), \(2\), and \(3\) are used.

\paragraph*{State.}
A state is a tuple \((r_3,r_2,s_2,a,b)\). The coordinate \(r_3\) is the largest index of a processed vertex of \(X\) assigned value \(3\). The coordinates \(r_2\) and \(s_2\) are respectively the largest and second largest indices of processed vertices of \(X\) assigned value \(2\). Missing indices are represented by \(0\). The pair \((a,b)\) stores the unfinished double Roman requirements: deficit \(2\) means that a zero vertex has not yet received any value-\(2\) neighbour from \(Y\), while deficit \(1\) means that it has received one such neighbour but still needs either another value-\(2\) neighbour or a value-\(3\) neighbour.

The initial state is \((0,0,0,p_\infty,p_\infty)\) with value \(0\), and a final state is accepting exactly when \(a=b=p_\infty\).

\paragraph*{Processing a vertex of \texorpdfstring{\(X\)}.}
Suppose that \(x_i\) is processed from a finite state \((r_3,r_2,s_2,a,b)\).
\begin{enumerate}
	\item If \(f(x_i)=0\), then \(x_i\) starts with deficit \(2\). Thus, if \(a=p_\infty\), set \(a_0=i\); otherwise set \(a_0=a\). The new state is \((r_3,r_2,s_2,a_0,b)\).
	\item If \(f(x_i)=2\), then \(x_i\) becomes the newest value-\(2\) vertex of \(X\). The new state is \((r_3,i,r_2,a,b)\), and the weight increases by \(2\).
	\item If \(f(x_i)=3\), then \(x_i\) becomes the newest value-\(3\) vertex of \(X\). The new state is \((i,r_2,s_2,a,b)\), and the weight increases by \(3\).
\end{enumerate}

\paragraph*{Processing a vertex of \texorpdfstring{\(Y\)}.}
Let \(y\in Y_i\), and let \(l=l(y)\).
\begin{enumerate}
	\item If \(f(y)=0\), then \(y\) must have either a value-\(3\) neighbour or two value-\(2\) neighbours. The first condition is equivalent to \(r_3\geq l\), and the second is equivalent to \(s_2\geq l\). Hence, this transition is feasible exactly when \(r_3\geq l\) or \(s_2\geq l\).
	\item If \(f(y)=2\), then \(y\) gives one useful unit to every unfinished zero vertex of \(X\) in its interval. The deficit pair becomes \(\operatorname{Reduce}(a,b,l)\), and the weight increases by \(2\).
	\item If \(f(y)=3\), then every unfinished zero vertex in its interval becomes safe. The deficit pair becomes \(\operatorname{Clear}(a,b,l)\), and the weight increases by \(3\).
\end{enumerate}

\subsection{Perfect Roman Domination}

A perfect Roman dominating function of \(G\) is a function \(f:V(G)\rightarrow\{0,1,2\}\) such that every vertex assigned value \(0\) has exactly one neighbour assigned value \(2\). The perfect Roman domination number is denoted by \(\gamma_R^p(G)\).

For a processed vertex \(x_t\in X\) assigned value \(0\), we use two terms. The vertex \(x_t\) is \emph{active} if it has not yet received any neighbour in \(Y\) assigned value \(2\). It is \emph{closed} if it has already received exactly one such neighbour. A closed vertex must never be reached again by a future vertex of \(Y\) assigned value \(2\).

\paragraph*{State.}
A state is a tuple \((r_1,r_2,q,p,t)\). The coordinates \(r_1\) and \(r_2\) are respectively the largest and second largest indices of processed vertices of \(X\) assigned value \(2\); missing indices are represented by \(0\). The coordinate \(q\) is the largest index of a closed vertex of \(X\), and \(q=0\) if no closed vertex exists. The coordinates \(p\) and \(t\) are respectively the first and last active vertices of \(X\). If no active vertex exists, then \(p=t=p_\infty\).

The pair \((p,t)\) is sufficient because the suffix structure permits only three useful possibilities when a future value-\(2\) vertex of \(Y\) is processed: it reaches all active vertices, it reaches none of them, or it reaches only a proper suffix. The third possibility can never be extended to a perfect Roman dominating function. Indeed, after a proper suffix is closed, any future interval that reaches the earliest still-active vertex also reaches a later vertex that is already closed, thereby giving that closed vertex a second value-\(2\) neighbour.

The initial state is \((0,0,0,p_\infty,p_\infty)\) with value \(0\). A final state is accepting exactly when \(p=t=p_\infty\).

\paragraph*{Processing a vertex of \texorpdfstring{\(X\)}.}
Suppose that \(x_i\) is processed from a finite state \((r_1,r_2,q,p,t)\).
\begin{enumerate}
	\item If \(f(x_i)=0\), then \(x_i\) becomes active. If \(p=p_\infty\), the new active boundaries are \(p=t=i\). Otherwise, \(p\) remains unchanged and \(t\) becomes \(i\).
	\item If \(f(x_i)=1\), then \(x_i\) is safe by itself. The state does not change, and the weight increases by \(1\).
	\item If \(f(x_i)=2\), then \(x_i\) becomes the newest value-\(2\) vertex of \(X\). The new state is \((i,r_1,q,p,t)\), and the weight increases by \(2\).
\end{enumerate}

\paragraph*{Processing a vertex of \texorpdfstring{\(Y\)}.}
Let \(y\in Y_i\), and let \(l=l(y)\).
\begin{enumerate}
	\item If \(f(y)=0\), then \(y\) must have exactly one value-\(2\) neighbour in \(N(y)=\{x_l,\ldots,x_i\}\). Since \(r_1\) and \(r_2\) are the rightmost and second rightmost value-\(2\) vertices of \(X\), this condition is equivalent to \(r_1\geq l\) and \(r_2<l\).
	\item If \(f(y)=1\), then \(y\) is safe and does not affect any active or closed vertex of \(X\). The state is unchanged, and the weight increases by \(1\).
	\item Suppose that \(f(y)=2\). First, \(y\) must avoid every closed vertex of \(X\). Since \(q\) is the largest closed index and \(N(y)\) is a suffix, this condition is exactly \(l>q\). If \(l\leq q\), the transition is infeasible. Assume that \(l>q\). If \(p=p_\infty\), then no active vertex exists, and the state is unchanged. If \(l\leq p\), then \(y\) reaches the earliest active vertex, and hence every active vertex; all active vertices become closed, the new closed boundary is \(t\), and the new state is \((r_1,r_2,t,p_\infty,p_\infty)\). If \(l>t\), then \(y\) reaches no active vertex and the state is unchanged. Finally, if \(p<l\leq t\), then \(y\) closes only a proper suffix of the active vertices, and the transition is discarded because it cannot be extended to a perfect Roman dominating function. In every feasible case, the weight increases by \(2\).
\end{enumerate}

\subsection{unique response Roman Domination}

A unique response Roman dominating function of \(G\) is a Roman dominating function \(f:V(G)\rightarrow\{0,1,2\}\) satisfying the additional conditions that every vertex assigned value \(0\) has exactly one neighbour assigned value \(2\), and every vertex assigned value \(1\) or \(2\) has no neighbour assigned value \(2\). The unique response Roman domination number is denoted by \(u_R(G)\).

\paragraph*{State.}
The state is again a tuple \((r_1,r_2,q,p,t)\). The coordinates \(r_1\) and \(r_2\) are the largest and second largest indices of processed vertices of \(X\) assigned value \(2\). The coordinates \(p\) and \(t\) are the first and last active vertices of \(X\), where active means assigned value \(0\) but not yet dominated by a value-\(2\) vertex of \(Y\).

The meaning of \(q\) is stronger than in perfect Roman domination. A processed vertex of \(X\) is called \emph{forbidden} if it is assigned value \(1\) or \(2\), or if it is assigned value \(0\) and is already closed. A future vertex of \(Y\) assigned value \(2\) must avoid every forbidden vertex. The coordinate \(q\) is the largest index of a forbidden vertex, and \(q=0\) if no forbidden vertex exists.

The initial state is \((0,0,0,p_\infty,p_\infty)\) with value \(0\). A final state is accepting exactly when \(p=t=p_\infty\).

\paragraph*{Processing a vertex of \texorpdfstring{\(X\)}.}
Suppose that \(x_i\) is processed from a finite state \((r_1,r_2,q,p,t)\).
\begin{enumerate}
	\item If \(f(x_i)=0\), then \(x_i\) becomes active. If \(p=p_\infty\), the new active boundaries are \(p=t=i\); otherwise, \(p\) remains unchanged and \(t\) becomes \(i\).
	\item If \(f(x_i)=1\), then \(x_i\) becomes forbidden. This transition is feasible only when \(p=p_\infty\). Otherwise, any future value-\(2\) vertex of \(Y\) that reaches the earliest active vertex also reaches \(x_i\), contradicting the unique response condition. When the transition is feasible, the new state is \((r_1,r_2,i,p_\infty,p_\infty)\), and the weight increases by \(1\).
	\item If \(f(x_i)=2\), then \(x_i\) becomes forbidden and also becomes the newest value-\(2\) vertex of \(X\). This transition is likewise feasible only when \(p=p_\infty\). The new state is \((i,r_1,i,p_\infty,p_\infty)\), and the weight increases by \(2\).
\end{enumerate}

\paragraph*{Processing a vertex of \texorpdfstring{\(Y\)}.}
Let \(y\in Y_i\), and let \(l=l(y)\).
\begin{enumerate}
	\item If \(f(y)=0\), then \(y\) must have exactly one value-\(2\) neighbour in \(N(y)\). This is equivalent to \(r_1\geq l\) and \(r_2<l\).
	\item If \(f(y)=1\), then \(y\) must have no value-\(2\) neighbour. Since \(r_1\) is the rightmost value-\(2\) vertex of \(X\), this condition is equivalent to \(r_1<l\). When it holds, the state is unchanged and the weight increases by \(1\).
	\item Suppose that \(f(y)=2\). The vertex \(y\) must avoid all forbidden vertices of \(X\), which is equivalent to \(l>q\). If \(l\leq q\), the transition is infeasible. Assume that \(l>q\). If \(p=p_\infty\), then no active vertex exists and the state is unchanged. If \(l\leq p\), then \(y\) reaches and closes every active vertex; these vertices become forbidden, the new forbidden boundary is \(t\), and the new state is \((r_1,r_2,t,p_\infty,p_\infty)\). If \(l>t\), then \(y\) reaches no active vertex and the state remains unchanged. If \(p<l\leq t\), then \(y\) closes only a proper suffix of the active vertices, and the transition is discarded because any future value-\(2\) vertex needed to close the earliest active vertex would also reach a later forbidden vertex. In each feasible case, the weight increases by \(2\).
\end{enumerate}

\subsection{Pseudocode}

We divide the unified framework into three parts. Algorithm~\ref{alg:roman-type-convex} is the main routine, Algorithm~\ref{alg:process-x-roman-type} processes a vertex of \(X\), and Algorithm~\ref{alg:process-y-roman-type} processes a vertex of \(Y\). The parameter \(\Pi\) specifies the problem: \(\mathrm{R2}\), \(\mathrm{DR}\), \(\mathrm{PR}\), or \(\mathrm{UR}\).

A state is called finite if its table entry is smaller than \(+\infty\). In either transition subroutine, the temporary table \(D'\) is initialized with \(+\infty\) in every entry. Whenever several transitions produce the same state, the update convention keeps only the minimum resulting weight. The accepting states have deficit coordinates \((p_\infty,p_\infty)\) for \(\mathrm{R2}\) and \(\mathrm{DR}\), and active-boundary coordinates \((p_\infty,p_\infty)\) for \(\mathrm{PR}\) and \(\mathrm{UR}\).

\begin{algorithm}[H]
	\DontPrintSemicolon
	\caption{\textsc{Roman-Type-Convex-Bipartite}}
	\label{alg:roman-type-convex}
	\KwIn{A convex bipartite graph \(G=(X\cup Y,E)\), where \(X=\{x_1,\ldots,x_m\}\), the interval representation \(N(y)=\{x_{l(y)},\ldots,x_{r(y)}\}\) for every \(y\in Y\), and a type \(\Pi\in\{\mathrm{R2},\mathrm{DR},\mathrm{PR},\mathrm{UR}\}\).}
	\KwOut{The corresponding Roman-type domination number.}
	
	Let \(p_\infty=m+1\)\;
	\For{\(i=1\) \KwTo \(m\)}{Set \(Y_i=\emptyset\)\;}
	\ForEach{\(y\in Y\)}{Append \(y\) to \(Y_{r(y)}\)\;}
	Set all entries of \(D\) to \(+\infty\)\;
	Set \(D(0,0,0,p_\infty,p_\infty)=0\)\;
	
	\For{\(i=1\) \KwTo \(m\)}{
		Set \(D=\textsc{Process-Roman-Type-X}(D,i,\Pi)\)\;
		\ForEach{\(y\in Y_i\), in an arbitrary fixed order}{
			Set \(D=\textsc{Process-Roman-Type-Y}(D,y,\Pi)\)\;
		}
	}
	
	\If{\(\Pi=\mathrm{R2}\)}{\Return{\(\min\{D(r,u,v,p_\infty,p_\infty):0\leq r,u,v\leq m\}\)}\;}
	\If{\(\Pi=\mathrm{DR}\)}{\Return{\(\min\{D(r_3,r_2,s_2,p_\infty,p_\infty):0\leq r_3,r_2,s_2\leq m\}\)}\;}
	\If{\(\Pi=\mathrm{PR}\)}{\Return{\(\min\{D(r_1,r_2,q,p_\infty,p_\infty):0\leq r_1,r_2,q\leq m\}\)}\;}
	\If{\(\Pi=\mathrm{UR}\)}{\Return{\(\min\{D(r_1,r_2,q,p_\infty,p_\infty):0\leq r_1,r_2,q\leq m\}\)}\;}
\end{algorithm}

\begin{algorithm}[H]
	\DontPrintSemicolon
	\caption{\textsc{Process-Roman-Type-X}}
	\label{alg:process-x-roman-type}
	\KwIn{The current table \(D\), an index \(i\), and the type \(\Pi\).}
	\KwOut{The updated table after processing \(x_i\).}
	
	Set all entries of \(D'\) to \(+\infty\)\;
	\ForEach{finite state \(\sigma\) of \(D\)}{
		\If{\(\Pi=\mathrm{R2}\)}{
			Write \(\sigma=(r,u,v,a,b)\)\;
			\eIf{\(a=p_\infty\)}{Set \(a_0=i\)\;}{Set \(a_0=a\)\;}
			Update \(D'(r,u,v,a_0,b)\) with \(D(r,u,v,a,b)\)\tcp*{\(f(x_i)=0\)}
			Update \(D'(r,i,u,a,b)\) with \(D(r,u,v,a,b)+1\)\tcp*{\(f(x_i)=1\)}
			Update \(D'(i,u,v,a,b)\) with \(D(r,u,v,a,b)+2\)\tcp*{\(f(x_i)=2\)}
		}
		\If{\(\Pi=\mathrm{DR}\)}{
			Write \(\sigma=(r_3,r_2,s_2,a,b)\)\;
			\eIf{\(a=p_\infty\)}{Set \(a_0=i\)\;}{Set \(a_0=a\)\;}
			Update \(D'(r_3,r_2,s_2,a_0,b)\) with \(D(r_3,r_2,s_2,a,b)\)\tcp*{\(f(x_i)=0\)}
			Update \(D'(r_3,i,r_2,a,b)\) with \(D(r_3,r_2,s_2,a,b)+2\)\tcp*{\(f(x_i)=2\)}
			Update \(D'(i,r_2,s_2,a,b)\) with \(D(r_3,r_2,s_2,a,b)+3\)\tcp*{\(f(x_i)=3\)}
		}
		\If{\(\Pi=\mathrm{PR}\)}{
			Write \(\sigma=(r_1,r_2,q,p,t)\)\;
			\eIf{\(p=p_\infty\)}{
				Update \(D'(r_1,r_2,q,i,i)\) with \(D(r_1,r_2,q,p,t)\)\tcp*{\(f(x_i)=0\)}
			}{
				Update \(D'(r_1,r_2,q,p,i)\) with \(D(r_1,r_2,q,p,t)\)\tcp*{\(f(x_i)=0\)}
			}
			Update \(D'(r_1,r_2,q,p,t)\) with \(D(r_1,r_2,q,p,t)+1\)\tcp*{\(f(x_i)=1\)}
			Update \(D'(i,r_1,q,p,t)\) with \(D(r_1,r_2,q,p,t)+2\)\tcp*{\(f(x_i)=2\)}
		}
		\If{\(\Pi=\mathrm{UR}\)}{
			Write \(\sigma=(r_1,r_2,q,p,t)\)\;
			\eIf{\(p=p_\infty\)}{
				Update \(D'(r_1,r_2,q,i,i)\) with \(D(r_1,r_2,q,p,t)\)\tcp*{\(f(x_i)=0\)}
				Update \(D'(r_1,r_2,i,p_\infty,p_\infty)\) with \(D(r_1,r_2,q,p,t)+1\)\tcp*{\(f(x_i)=1\)}
				Update \(D'(i,r_1,i,p_\infty,p_\infty)\) with \(D(r_1,r_2,q,p,t)+2\)\tcp*{\(f(x_i)=2\)}
			}{
				Update \(D'(r_1,r_2,q,p,i)\) with \(D(r_1,r_2,q,p,t)\)\tcp*{\(f(x_i)=0\)}
			}
		}
	}
	\Return{\(D'\)}\;
\end{algorithm}

\begin{algorithm}[H]
	\DontPrintSemicolon
	\caption{\textsc{Process-Roman-Type-Y}}
	\label{alg:process-y-roman-type}
	\KwIn{The current table \(D\), a vertex \(y\in Y\), and the type \(\Pi\).}
	\KwOut{The updated table after processing \(y\).}
	
	Set \(l=l(y)\)\;
	Set all entries of \(D'\) to \(+\infty\)\;
	\ForEach{finite state \(\sigma\) of \(D\)}{
		\If{\(\Pi=\mathrm{R2}\)}{
			Write \(\sigma=(r,u,v,a,b)\)\;
			\If{\(r\geq l\) or \(v\geq l\)}{Update \(D'(r,u,v,a,b)\) with \(D(r,u,v,a,b)\)\tcp*{\(f(y)=0\)}}
			Compute \((a_1,b_1)=\operatorname{Reduce}(a,b,l)\)\;
			Update \(D'(r,u,v,a_1,b_1)\) with \(D(r,u,v,a,b)+1\)\tcp*{\(f(y)=1\)}
			Compute \((a_2,b_2)=\operatorname{Clear}(a,b,l)\)\;
			Update \(D'(r,u,v,a_2,b_2)\) with \(D(r,u,v,a,b)+2\)\tcp*{\(f(y)=2\)}
		}
		\If{\(\Pi=\mathrm{DR}\)}{
			Write \(\sigma=(r_3,r_2,s_2,a,b)\)\;
			\If{\(r_3\geq l\) or \(s_2\geq l\)}{Update \(D'(r_3,r_2,s_2,a,b)\) with \(D(r_3,r_2,s_2,a,b)\)\tcp*{\(f(y)=0\)}}
			Compute \((a_2,b_2)=\operatorname{Reduce}(a,b,l)\)\;
			Update \(D'(r_3,r_2,s_2,a_2,b_2)\) with \(D(r_3,r_2,s_2,a,b)+2\)\tcp*{\(f(y)=2\)}
			Compute \((a_3,b_3)=\operatorname{Clear}(a,b,l)\)\;
			Update \(D'(r_3,r_2,s_2,a_3,b_3)\) with \(D(r_3,r_2,s_2,a,b)+3\)\tcp*{\(f(y)=3\)}
		}
		\If{\(\Pi=\mathrm{PR}\)}{
			Write \(\sigma=(r_1,r_2,q,p,t)\)\;
			\If{\(r_1\geq l\) and \(r_2<l\)}{Update \(D'(r_1,r_2,q,p,t)\) with \(D(r_1,r_2,q,p,t)\)\tcp*{\(f(y)=0\)}}
			Update \(D'(r_1,r_2,q,p,t)\) with \(D(r_1,r_2,q,p,t)+1\)\tcp*{\(f(y)=1\)}
			\If{\(l>q\)}{
				\If{\(p=p_\infty\)}{Update \(D'(r_1,r_2,q,p,t)\) with \(D(r_1,r_2,q,p,t)+2\)\tcp*{\(f(y)=2\)}}
				\ElseIf{\(l\leq p\)}{Update \(D'(r_1,r_2,t,p_\infty,p_\infty)\) with \(D(r_1,r_2,q,p,t)+2\)\tcp*{\(f(y)=2\)}}
				\ElseIf{\(l>t\)}{Update \(D'(r_1,r_2,q,p,t)\) with \(D(r_1,r_2,q,p,t)+2\)\tcp*{\(f(y)=2\)}}
			}
		}
		\If{\(\Pi=\mathrm{UR}\)}{
			Write \(\sigma=(r_1,r_2,q,p,t)\)\;
			\If{\(r_1\geq l\) and \(r_2<l\)}{Update \(D'(r_1,r_2,q,p,t)\) with \(D(r_1,r_2,q,p,t)\)\tcp*{\(f(y)=0\)}}
			\If{\(r_1<l\)}{Update \(D'(r_1,r_2,q,p,t)\) with \(D(r_1,r_2,q,p,t)+1\)\tcp*{\(f(y)=1\)}}
			\If{\(l>q\)}{
				\If{\(p=p_\infty\)}{Update \(D'(r_1,r_2,q,p,t)\) with \(D(r_1,r_2,q,p,t)+2\)\tcp*{\(f(y)=2\)}}
				\ElseIf{\(l\leq p\)}{Update \(D'(r_1,r_2,t,p_\infty,p_\infty)\) with \(D(r_1,r_2,q,p,t)+2\)\tcp*{\(f(y)=2\)}}
				\ElseIf{\(l>t\)}{Update \(D'(r_1,r_2,q,p,t)\) with \(D(r_1,r_2,q,p,t)+2\)\tcp*{\(f(y)=2\)}}
			}
		}
	}
	\Return{\(D'\)}\;
\end{algorithm}

\subsection{Correctness Proof}

We prove the four algorithms simultaneously through a common invariant. For each \(i\in\{1,\ldots,m\}\), let \(X_i=\{x_1,\ldots,x_i\}\) and \(Y_{\leq i}=\{y\in Y:r(y)\leq i\}\).

\begin{lemma}\label{lem:roman-type-dp-invariant}
	After every processing step, for each state \(\sigma\), the entry \(D(\sigma)\) is the minimum weight of a feasible partial assignment realizing \(\sigma\). If no feasible partial assignment realizes \(\sigma\), then \(D(\sigma)=+\infty\).
\end{lemma}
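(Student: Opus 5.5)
The plan is induction on the number of processed vertices along the order \(x_1,Y_1,x_2,Y_2,\ldots,x_m,Y_m\), treating all four variants at once. The claim is read in the ``extendable'' sense the state definitions intend. A \emph{feasible partial assignment} is a labelling \(g\) of the processed vertices such that: every processed vertex of \(Y\) already satisfies its full local condition; no processed vertex violates an exactness or forbiddenness condition; the unfinished processed vertices of \(X\) are exactly summarised by the boundary coordinates of \(\sigma\); and, for PR and UR, no active vertex is preceded by a closed (respectively forbidden) vertex. For DR we restrict to \(\{0,2,3\}\)-labellings, which is justified by the standard no-value-\(1\) observation recalled above. The base case is the empty assignment, which realises exactly the initial state with weight \(0\).

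\textbf{Key fact.} For each variant we isolate one statement. If two feasible partial assignments \(g,g'\) realise the same state \(\sigma\), then for every labelling of the next vertex \(z\):
\begin{itemize}
\item[(i)] \(g\cup\{z\mapsto c\}\) is feasible exactly when \(g'\cup\{z\mapsto c\}\) is;
\item[(ii)] the two extensions realise the same successor state, namely the one written by Algorithms~\ref{alg:process-x-roman-type} and~\ref{alg:process-y-roman-type}.
\end{itemize}
With this fact, the inductive step is routine. For \(\le\), take an optimal assignment realising \(\sigma'\), restrict it to the previously processed vertices, and apply the induction hypothesis to its state \(\sigma\); the algorithm's transition then gives \(D'(\sigma')\le\) its weight. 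For \(\ge\), every update of \(D'(\sigma')\) comes from a finite \(D(\sigma)\), which is witnessed by some assignment, and extending that assignment gives a feasible assignment realising \(\sigma'\) of the same weight. Infeasible states stay at \(+\infty\) because only genuine transitions write to \(D'\).

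\textbf{Processing \(x_i\).} The new vertex has no processed \(Y\)-neighbours, so the updates of \(r,u,v\) (respectively \(r_3,r_2,s_2\) and \(r_1,r_2,q\)) and of \(a,p,t\) follow directly from the definitions as largest or earliest indices. For UR we must also show that labelling \(x_i\) with \(1\) or \(2\) while an active vertex exists can never be completed. This holds because any value-\(2\) vertex of \(Y\) reaching the active vertex also reaches \(x_i\), since its neighbourhood is an interval ending at or after \(i\). Discarding such assignments is exactly the ``no active vertex before a forbidden one'' clause.

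\textbf{Processing \(y\in Y_i\).} Since \(N(y)=\{x_l,\ldots,x_i\}\) is a suffix, the local tests reduce to comparisons with the recorded indices. A value-\(2\) neighbour exists iff the largest value-\(2\) index is at least \(l\). Two value-\(1\) neighbours (or two value-\(2\) neighbours) exist iff the second-largest such index is at least \(l\). Exactly one value-\(2\) neighbour exists iff \(r_1\ge l>r_2\). Avoiding all closed or forbidden vertices holds iff \(l>q\).

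\textbf{Main obstacle.} The hardest part is checking that \(\operatorname{Reduce}\) and \(\operatorname{Clear}\) keep \((a,b)\) correct. I would first prove an auxiliary invariant: every deficit-\(2\) vertex lies at or after \(a\), and every deficit-\(1\) vertex lies either at \(b<a\) or after it, or at or after \(a\). Because suffixes are monotone, any \(y\) reaching \(x_a\) also reaches every later unfinished vertex. The case analysis in the definitions of \(\operatorname{Reduce}\) and \(\operatorname{Clear}\) then gives the new earliest indices. There is a subtle point for R2/DR: the deficit-\(1\) vertices lying after \(a\) are not tracked. After \(x_a\) is reduced, they become either safe or still deficit \(1\), and they lie after \(a\), so setting \(b=a\) remains a correct earliest index. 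This holds in the relaxed sense that the true state of the extension is determined by \(\sigma\) alone.

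For PR and UR, the obstacle is justifying that partial closures (\(p<l\le t\)) are discarded. The argument is that afterwards \(x_p\) is active with a closed or forbidden vertex after it. Any later \(Y\)-vertex reaching \(x_p\) has a neighbourhood ending at an index \(\ge i\) and so reaches that vertex, which makes the assignment non-extendable. Such assignments are therefore excluded by the definition of feasibility, and the key fact remains valid.
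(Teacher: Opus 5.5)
Your proposal is correct and follows the same route as the paper's proof: induction along the processing order, with the suffix property used for three things. It makes the local tests exact from the rightmost-index coordinates, it makes $\operatorname{Reduce}$/$\operatorname{Clear}$ and full closures exact state updates, and it shows that partial closures (and, for UR, positive $X$-labels after an active vertex) are non-extendable, while retaining only the minimum is justified because future transitions depend only on the state. Your explicit ``extendable'' notion of feasibility and the key fact (i)--(ii) make precise what the paper only asserts in its final paragraph. One wording fix is needed: the clause in your definition should say that no active vertex \emph{precedes} a closed (respectively forbidden) vertex, which is how you actually use it later, not that no active vertex is \emph{preceded by} one.
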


\begin{proof}
	The claim is proved by induction on the processing order. Before any vertex is processed, the only feasible assignment is the empty assignment, represented by \((0,0,0,p_\infty,p_\infty)\) with value \(0\). Thus, the invariant holds initially.
	
	Assume that the invariant holds before the next vertex is processed. If the next vertex is \(x_i\), then no already processed vertex of \(Y\) is adjacent to \(x_i\). Consequently, assigning value \(0\) creates exactly the new unfinished requirement described in the state definition of the chosen variant. Assigning a positive value makes \(x_i\) safe and updates the rightmost positive indices exactly as stated. In the unique response case, a positive value is allowed only when no active vertex exists; otherwise, every future suffix that reaches the earlier active vertex also reaches \(x_i\), making a valid completion impossible. Hence, the transitions for \(x_i\) are sound and exhaustive.
	
	Now suppose that the next vertex is \(y\in Y_i\). Since \(N(y)=\{x_{l(y)},\ldots,x_i\}\), all neighbours of \(y\) have already been processed. Therefore, the feasibility of assigning value \(0\) or, in the unique response case, value \(1\), can be decided exactly from the stored rightmost positive indices. For Roman-\(\{2\}\) domination, the test \(r\geq l(y)\) or \(v\geq l(y)\) is equivalent to the existence of one value-\(2\) neighbour or two value-\(1\) neighbours. For double Roman domination, the test \(r_3\geq l(y)\) or \(s_2\geq l(y)\) is equivalent to the existence of one value-\(3\) neighbour or two value-\(2\) neighbours. For perfect and unique response Roman domination, the condition \(r_1\geq l(y)\) and \(r_2<l(y)\) is equivalent to having exactly one value-\(2\) neighbour. In the unique response case, \(r_1<l(y)\) is equivalent to having no value-\(2\) neighbour.
	
	For Roman-\(\{2\}\) and double Roman domination, the effect of a positive value on the unfinished vertices of \(X\) is exactly described by \(\operatorname{Reduce}\) or \(\operatorname{Clear}\). The suffix property guarantees that the compressed pair \((a,b)\) retains all information relevant to future transitions.
	
	For perfect Roman domination, a value-\(2\) vertex of \(Y\) must avoid closed vertices, and the condition \(l(y)>q\) is equivalent to this requirement. It either reaches all active vertices, reaches none, or reaches only a proper suffix. The first two possibilities are represented by the stated transitions. The third possibility cannot be extended, because any future suffix reaching the earliest remaining active vertex also reaches a later vertex that has already been closed. The unique response case is identical after replacing closed vertices by forbidden vertices. Thus, every listed transition is sound, and every feasible extension of the current partial assignment appears among the listed transitions.
	
	Finally, when several partial assignments produce the same new state, only the one of minimum weight is retained. This is safe because all future transitions depend only on the state and not on the internal details of the partial assignment. Hence, the invariant is preserved.
\end{proof}

\begin{theorem}\label{thm:roman-type-convex-correct}
	Algorithm~\ref{alg:roman-type-convex} computes \(\gamma_{\{R2\}}(G)\), \(\gamma_{dR}(G)\), \(\gamma_R^p(G)\), or \(u_R(G)\), according to the chosen value of \(\Pi\).
\end{theorem}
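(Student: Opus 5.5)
The theorem follows from Lemma~\ref{lem:roman-type-dp-invariant} applied after the last processing step, plus a check that the accepting states correspond exactly to feasible functions on all of \(G\). I will first record that the processing order \(x_1,Y_1,\ldots,x_m,Y_m\) covers every vertex exactly once. Every \(y\in Y\) has \(1\le r(y)\le m\), so it lies in exactly one \(Y_i\). Hence after the final step, a partial assignment is a total function \(f\) on \(V(G)\), and the lemma gives, for every state \(\sigma\), that \(D(\sigma)\) is the minimum weight of a feasible partial assignment realizing \(\sigma\).

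Next I show that a total function \(f\) is a feasible solution of type \(\Pi\) if and only if it is a feasible partial assignment whose final state is accepting. Feasibility of a partial assignment already guarantees that every vertex of \(Y\) satisfies its local condition. This holds because each \(y\) was tested when processed, with all its neighbours fixed, and no later step changes its neighbourhood values. For \(X\), I argue by variant.

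\begin{enumerate}
\item For \(\mathrm{R2}\) and \(\mathrm{DR}\), the pair \((a,b)\) equals \((p_\infty,p_\infty)\) exactly when no processed zero vertex of \(X\) has positive deficit. By the definition of the deficits, this means every zero vertex of \(X\) has received total contribution at least \(2\) (respectively, a value-\(3\) neighbour or two value-\(2\) neighbours).
\item For \(\mathrm{PR}\) and \(\mathrm{UR}\), \(p=t=p_\infty\) means no active vertex remains. The closed/forbidden bookkeeping, enforced through the test \(l>q\), guarantees that no zero vertex of \(X\) received a second value-\(2\) neighbour. In the \(\mathrm{UR}\) case it also guarantees that no positive vertex of \(X\) received any value-\(2\) neighbour.
\end{enumerate}

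Conversely, for an optimal solution \(f^*\), I will trace \(f^*\) through the transitions to show it is realized by some accepting state. For \(\mathrm{DR}\), I first replace \(f^*\) by a minimum function using no value \(1\), via the standard fact recalled earlier. For \(\mathrm{PR}\) and \(\mathrm{UR}\), I must check that \(f^*\) never uses a discarded transition. The key point is the suffix argument already given in the lemma. If a value-\(2\) vertex \(y\) of \(Y\) reached only a proper suffix of the active vertices, the earliest active vertex would later need a value-\(2\) neighbour \(y'\) whose interval is a suffix containing it. That interval would also contain a closed (or forbidden) vertex, contradicting perfectness or unique response of \(f^*\). The \(\mathrm{UR}\) restriction that positive \(x_i\) are allowed only when \(p=p_\infty\) is handled the same way.

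Combining both directions, the minimum over accepting states returned by Algorithm~\ref{alg:roman-type-convex} equals the minimum weight of a feasible function, which is the corresponding parameter. \(D\) is never infinite on all accepting states, since each parameter is finite: for example, the all-\(1\) function, all-\(2\) for \(\mathrm{DR}\) with the value-\(1\) reduction, or \(f\equiv1\) on \(X\) and \(Y\) for \(\mathrm{PR}\) and \(\mathrm{UR}\).

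The main obstacle I expect is the completeness direction for the compressed deficit pair \((a,b)\). I must verify that two partial assignments with the same \((a,b)\) admit exactly the same set of feasible completions, so that the minimization in the lemma loses nothing. I would prove this by induction on future suffix operations, showing that \(\operatorname{Reduce}\) and \(\operatorname{Clear}\) applied to the compressed pair commute with the true deficit vector, in the sense that "all deficits zero" is preserved as an exact invariant. The suffix property is the tool here: any interval reaching \(x_a\) reaches every later unfinished vertex. One subtlety I would check carefully is a deficit-\(1\) vertex lying between \(b\) and \(a\) that is not recorded. Such a vertex is cleared by any interval clearing \(x_a\) twice, so it is dominated whenever \(x_a\) is.
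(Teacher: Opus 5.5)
Your proposal follows the paper's route. You invoke Lemma~\ref{lem:roman-type-dp-invariant} after the last step, note that every vertex of \(Y\) was checked when it was processed, and identify the accepting states (\(a=b=p_\infty\), respectively \(p=t=p_\infty\)) with the complete feasible functions. Your extra checks are correct and make explicit what the paper's short proof leaves inside the lemma: the no-value-\(1\) normalization for \(\mathrm{DR}\), and the argument that an optimal \(\mathrm{PR}\) or \(\mathrm{UR}\) function never uses a discarded transition.

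One justification in your final paragraph is wrong. A deficit-\(1\) vertex \(x_t\) with \(b<t<a\) is \emph{not} dominated whenever \(x_a\) is: a future interval with \(t<l\le a\) reaches \(x_a\) but misses \(x_t\). The correct reason such a vertex need not be recorded is that \(t>b\). Every future suffix reaching \(x_b\) also reaches \(x_t\), so the requirement of \(x_t\) is implied by that of \(x_b\). It is only the unrecorded deficit-\(1\) vertices \emph{after} \(a\) whose requirement is implied by that of \(x_a\). With this correction, your commutation argument for \(\operatorname{Reduce}\) and \(\operatorname{Clear}\) goes through.
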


\begin{proof}
	After all vertices have been processed, every vertex of \(Y\) already satisfies its local domination requirement, because this requirement was checked when that vertex was processed. Thus, only unfinished vertices of \(X\) can prevent a final state from representing a complete solution.
	
	For Roman-\(\{2\}\) and double Roman domination, no unfinished zero vertex remains exactly when \(a=b=p_\infty\). For perfect Roman and unique response Roman domination, no active zero vertex remains exactly when \(p=t=p_\infty\). Hence, the final states accepted by the main algorithm are precisely the complete feasible functions of the corresponding type. By Lemma~\ref{lem:roman-type-dp-invariant}, each accepted table entry stores the minimum weight among all functions realizing that state. Taking the minimum over all accepted states, therefore, gives the required domination number.
\end{proof}

\subsection{Running Time Analysis}

\begin{theorem}\label{thm:roman-type-convex-running}
	Let \(G=(X\cup Y,E)\) be a convex bipartite graph with a given convex ordering of \(X\), and let \(n=|V(G)|\). Then Roman-\(\{2\}\) domination, double Roman domination, perfect Roman domination, and unique response Roman domination can each be computed in \(O(n^6)\) time and \(O(n^5)\) space.
\end{theorem}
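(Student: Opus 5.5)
Correctness is already settled by Theorem~\ref{thm:roman-type-convex-correct}, so the plan is only to count work and memory: bound the number of states, bound the cost of one processing step, and multiply by the number of steps.

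\textbf{Preprocessing.} We have \(m\le n\) and \(|Y|\le n\). Computing \(p_\infty=m+1\) and distributing each \(y\in Y\) into the bucket \(Y_{r(y)}\) takes \(O(n)\) time, since the endpoints \(l(y),r(y)\) are given. This is dominated by the later bounds.

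\textbf{State space.} For each choice of \(\Pi\), a state is a five-tuple. For R2 it is \((r,u,v,a,b)\); for DR it is \((r_3,r_2,s_2,a,b)\); for PR and UR it is \((r_1,r_2,q,p,t)\). Every coordinate ranges over \(\{0,1,\ldots,m+1\}\): the positive-index coordinates and \(q\) lie in \(\{0,\ldots,m\}\), and the deficit or active-boundary coordinates lie in \(\{1,\ldots,m+1\}\). Hence there are at most \((m+2)^5=O(n^5)$ states. The table \(D\) and the temporary table \(D'\) are arrays indexed directly by these tuples. Only two tables are live at any time, since \(D'\) replaces \(D\) after each step. This gives the \(O(n^5)\) space bound.

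\textbf{Cost per step.} Each call to Algorithm~\ref{alg:process-x-roman-type} or Algorithm~\ref{alg:process-y-roman-type} does two things. First, it initializes \(D'\) to \(+\infty\), which costs \(O(n^5)\). Second, it iterates over all states \(\sigma\) of \(D\); it is simplest to scan every entry and skip the non-finite ones. For each finite state it performs at most three transitions. Each transition involves a constant number of index comparisons, possibly one call to \(\operatorname{Reduce}\) or \(\operatorname{Clear}\), and one update by minimum into an array cell. By the case lists defining \(\operatorname{Reduce}\) and \(\operatorname{Clear}\), both operations are a constant number of comparisons on \((a,b,l)\). Note that \(l=l(y)\) is read once per call. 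Hence each processing step costs \(O(n^5)\).

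\textbf{Number of steps and final answer.} The main loop processes each vertex of \(X\) once and each vertex of \(Y\) once, for \(m+|Y|=n\) steps in total. The total time is therefore \(n\cdot O(n^5)=O(n^6)\). The final minimization scans the accepting entries, which number \(O(n^3)\), and is negligible.

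No step is a genuine obstacle. The only point needing care is that each transition is \(O(1)\): there must be no hidden scan over an interval or over the active or deficient vertices. I would justify this by observing that every test is a single comparison of a stored boundary index with \(l(y)\). This is exactly what the compressed states were designed to guarantee.
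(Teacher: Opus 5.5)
Your proposal is correct and follows essentially the paper's argument: \(O(n^5)\) five-coordinate states, at most three constant-time transitions per finite state (with \(\operatorname{Reduce}\) and \(\operatorname{Clear}\) constant-time), one pass per vertex for \(O(n^6)\) total time, and two live tables for \(O(n^5)\) space. You explicitly charge \(O(n^5)\) per step for re-initializing \(D'\), a small refinement the paper leaves implicit; the paper, for its part, notes that extracting the endpoints \(l(y),r(y)\) from adjacency lists costs \(O(|E|)=O(n^2)\), and since the theorem assumes only a given ordering of \(X\), you should add this one line to your preprocessing step.
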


\begin{proof}
	Let \(m=|X|\) and \(q=|Y|\), so \(n=m+q\). The sets \(Y_i=\{y\in Y:r(y)=i\}\) can be constructed in \(O(m+q)\) time when the interval endpoints are given. If the endpoints must be extracted from adjacency lists, the preprocessing takes \(O(|E|)\) time, which is at most \(O(n^2)\) and does not dominate the dynamic programming time.
	
	Each of the four states has five coordinates, and every coordinate has at most \(m+1\) possible values. Therefore, each table contains \(O(m^5)\) states. Every processed vertex gives rise to at most three transitions from each finite state. Each transition performs only a constant number of comparisons, state-coordinate changes, and table updates. In particular, \(\operatorname{Reduce}\) and \(\operatorname{Clear}\) take constant time.
	
	The total time for the \(m\) vertices of \(X\) is \(O(m\cdot m^5)=O(m^6)\). The total time for the \(q\) vertices of \(Y\) is \(O(qm^5)\). Hence the total running time is \(O(m^6+qm^5)=O((m+q)m^5)=O(n^6)\). Only the current table \(D\) and the temporary table \(D'\) are stored, so the space requirement is \(O(m^5)=O(n^5)\).
\end{proof}

\subsection{Example}

We illustrate the four variants on the same convex bipartite graph. Let \(X=\{x_1,x_2,x_3,x_4\}\) with ordering \(x_1,x_2,x_3,x_4\), and let \(Y=\{y_1,y_2,y_3\}\). Define \(N(y_1)=\{x_1,x_2,x_3\}\), \(N(y_2)=\{x_2,x_3,x_4\}\), and \(N(y_3)=\{x_3,x_4\}\). Then \(r(y_1)=3\), \(r(y_2)=4\), and \(r(y_3)=4\). Hence \(Y_1=Y_2=\emptyset\), \(Y_3=\{y_1\}\), and \(Y_4=\{y_2,y_3\}\), and the processing order is \(x_1,x_2,x_3,y_1,x_4,y_2,y_3\).

\begin{figure}[htbp!]
	\centering
	\begin{tikzpicture}[
		xnode/.style={circle,draw,minimum size=7mm,inner sep=0pt},
		ynode/.style={circle,draw,minimum size=7mm,inner sep=0pt},
		edge/.style={thick},
		every node/.style={font=\small}
		]
		\node[xnode] (x1) at (0,3) {\(x_1\)};
		\node[xnode] (x2) at (0,2) {\(x_2\)};
		\node[xnode] (x3) at (0,1) {\(x_3\)};
		\node[xnode] (x4) at (0,0) {\(x_4\)};
		\node[ynode] (y1) at (3,2.5) {\(y_1\)};
		\node[ynode] (y2) at (3,1.3) {\(y_2\)};
		\node[ynode] (y3) at (3,0.4) {\(y_3\)};
		\node at (0,3.7) {\(X\)};
		\node at (3,3.7) {\(Y\)};
		\draw[edge] (x1)--(y1);
		\draw[edge] (x2)--(y1);
		\draw[edge] (x3)--(y1);
		\draw[edge] (x2)--(y2);
		\draw[edge] (x3)--(y2);
		\draw[edge] (x4)--(y2);
		\draw[edge] (x3)--(y3);
		\draw[edge] (x4)--(y3);
	\end{tikzpicture}
	\caption{The convex bipartite graph used in the example. The neighbourhoods of \(y_1\), \(y_2\), and \(y_3\) are intervals in the ordering \(x_1,x_2,x_3,x_4\).}
	\label{fig:roman-type-convex-example}
\end{figure}

\begin{table}[htbp]
	\centering
	\small
	\renewcommand{\arraystretch}{1.25}
	\begin{tabularx}{\linewidth}{>{\raggedright\arraybackslash}p{0.26\linewidth}>{\raggedright\arraybackslash}X>{\raggedright\arraybackslash}p{0.20\linewidth}}
		\hline
		Parameter & One optimal assignment & Optimum value \\
		\hline
		Roman-\(\{2\}\) domination & \(f(y_1)=2\), \(f(x_4)=2\); all other vertices receive \(0\). & \(\gamma_{\{R2\}}(G)=4\) \\
		\hline
		Double Roman domination & \(f(y_1)=3\), \(f(x_4)=3\); all other vertices receive \(0\). & \(\gamma_{dR}(G)=6\) \\
		\hline
		Perfect Roman domination & \(f(y_1)=2\), \(f(x_4)=2\); all other vertices receive \(0\). & \(\gamma_R^p(G)=4\) \\
		\hline
		unique response Roman domination & \(f(y_1)=2\), \(f(x_4)=2\); all other vertices receive \(0\). & \(u_R(G)=4\) \\
		\hline
	\end{tabularx}
	\caption{Optimal assignments for the four Roman-type domination parameters on the graph in Figure~\ref{fig:roman-type-convex-example}.}
	\label{tab:roman-type-convex-example}
\end{table}

For Roman-\(\{2\}\) domination, assign value \(2\) to \(y_1\) and \(x_4\), and value \(0\) to every other vertex. Each of \(x_1,x_2,x_3\) receives total neighbour value \(2\) from \(y_1\), while \(y_2\) and \(y_3\) receive total neighbour value \(2\) from \(x_4\). Hence, the assignment is feasible and has weight \(4\). A direct examination of assignments of weight at most \(3\) shows that none can simultaneously provide total neighbour value at least \(2\) to the vertices at both ends of the convex ordering. Therefore \(\gamma_{\{R2\}}(G)=4\).

For double Roman domination, assign value \(3\) to \(y_1\) and \(x_4\), and value \(0\) to every other vertex. The vertices \(x_1,x_2,x_3\) are defended by \(y_1\), while \(y_2\) and \(y_3\) are defended by \(x_4\). Thus, the weight is \(6\). Since \(x_1\) has the unique neighbour \(y_1\), either \(x_1\) is positive or \(y_1\) has value \(3\). The analogous requirements near \(x_4\), together with the conditions for \(y_2\) and \(y_3\), force total weight at least \(6\). Hence \(\gamma_{dR}(G)=6\).

For perfect Roman domination, the assignment with \(f(y_1)=2\), \(f(x_4)=2\), and all other values equal to \(0\) is feasible. Each of \(x_1,x_2,x_3\) has exactly one value-\(2\) neighbour, namely \(y_1\), while each of \(y_2,y_3\) has exactly one value-\(2\) neighbour, namely \(x_4\). Since every perfect Roman dominating function is also a Roman dominating function and the ordinary Roman domination number of this graph is \(4\), the optimum is \(\gamma_R^p(G)=4\).

Finally, the same assignment is a unique response Roman dominating function. Every zero vertex has exactly one value-\(2\) neighbour, and the two value-\(2\) vertices \(y_1\) and \(x_4\) are nonadjacent. Therefore, no positive vertex has a value-\(2\) neighbour. Again, every unique response Roman dominating function is a Roman dominating function, so no solution of weight below \(4\) exists. Hence \(u_R(G)=4\).


\section{Hardness of Roman-Type Domination Problems on Chordal Bipartite Graphs}\label{sec:chordal-bipartite-hardness}

In this section, we prove hardness results for several Roman-type domination problems on chordal bipartite graphs. We first prove NP-completeness for Roman-\(\{2\}\) domination by reducing from \textsc{Dominating Set} on chordal bipartite graphs. We then prove NP-completeness for unique response Roman domination by reducing from \textsc{Efficient Domination} on chordal bipartite graphs. Finally, we prove NP-completeness for perfect Roman domination by using a separate edge gadget that charges one extra unit whenever both endpoints of an original edge are selected. The Roman-\(\{2\}\) reduction uses one pendant vertex per original vertex, while the unique response Roman reduction uses two pendant vertices per original vertex.

\begin{figure}[htbp]
	\centering
	\resizebox{0.85\linewidth}{!}{%
		\begin{tikzpicture}[
			vtx/.style={circle,draw,minimum size=7mm,inner sep=0pt},
			oldv/.style={circle,draw,minimum size=4.5mm,inner sep=0pt},
			pend/.style={circle,draw,minimum size=7mm,inner sep=0pt},
			oldedge/.style={thick,dashed},
			newedge/.style={thick},
			every node/.style={font=\small}
			]
			
			\node at (0,2.0) {\(A\)-side};
			\node at (3,2.0) {\(B\)-side};
			\node[vtx] (v1) at (0,0) {\(v\)};
			\node[oldv] (u11) at (3,1.0) {};
			\node[oldv] (u12) at (3,0.35) {};
			\node[oldv] (u13) at (3,-0.30) {};
			\node at (3,-0.85) {\(N_G(v)\)};
			\node[pend] (p1) at (3,-1.65) {\(p_v\)};
			\draw[oldedge] (v1)--(u11);
			\draw[oldedge] (v1)--(u12);
			\draw[oldedge] (v1)--(u13);
			\draw[newedge] (v1)--(p1);
			\node[align=center] at (1.5,-2.55) {(a) Roman-\(\{2\}\) reduction\\one pendant vertex};
			
			\draw[dotted,thick] (4.25,2.25)--(4.25,-2.85);
			
			\node at (5.5,2.0) {\(A\)-side};
			\node at (8.5,2.0) {\(B\)-side};
			\node[vtx] (v2) at (5.5,0) {\(v\)};
			\node[oldv] (u21) at (8.5,1.0) {};
			\node[oldv] (u22) at (8.5,0.35) {};
			\node[oldv] (u23) at (8.5,-0.30) {};
			\node at (8.5,-0.85) {\(N_G(v)\)};
			\node[pend] (p2) at (8.5,-1.50) {\(p_v\)};
			\node[pend] (q2) at (8.5,-2.25) {\(q_v\)};
			\draw[oldedge] (v2)--(u21);
			\draw[oldedge] (v2)--(u22);
			\draw[oldedge] (v2)--(u23);
			\draw[newedge] (v2)--(p2);
			\draw[newedge] (v2)--(q2);
			\node[align=center] at (7.0,-3.15) {(b) unique response Roman reduction\\two pendant vertices};
	\end{tikzpicture}}
	\caption{Local gadgets used in the chordal-bipartite reductions. Dashed edges represent the original edges from \(v\) to its neighbours in \(G\), and solid edges represent the newly added pendant edges. The figure is drawn for the case \(v\in A\); when \(v\in B\), the construction is symmetric, and the new pendant vertices are placed on the \(A\)-side.}
	\label{fig:chordal-bipartite-roman-type-reductions}
\end{figure}

\subsection{Roman-\texorpdfstring{\(\{2\}\)}{2} Domination}

\begin{theorem}
	\textsc{Roman-\(\{2\}\) Domination} is NP-complete on chordal bipartite graphs.
\end{theorem}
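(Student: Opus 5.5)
The plan is a polynomial reduction from \textsc{Dominating Set} on chordal bipartite graphs, which is NP-complete by \cite{muller1987np}. Membership in \(\mathsf{NP}\) was already noted in Section~\ref{sec:preliminaries}. Given a chordal bipartite graph \(G=(A\cup B,E)\) with \(n\) vertices and an integer \(k\), I will build \(G'\) as in Figure~\ref{fig:chordal-bipartite-roman-type-reductions}(a). For every \(v\in V(G)\), add one new pendant vertex \(p_v\), adjacent only to \(v\), and place it on the side opposite to \(v\). The graph \(G'\) is bipartite, and since a pendant vertex lies on no cycle, every induced cycle of \(G'\) is an induced cycle of \(G\). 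Hence \(G'\) is chordal bipartite. The target claim is
\[
\gamma_{\{R2\}}(G') = n+\gamma(G),
\]
and the reduction maps \((G,k)\) to \((G',n+k)\).

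\textbf{Upper bound.} Let \(D\) be a dominating set of \(G\). Set \(f(v)=2\) for \(v\in D\), \(f(p_v)=1\) for \(v\notin D\), and \(f=0\) on all remaining vertices; the weight is \(2|D|+(n-|D|)=n+|D|\). We check each zero vertex:
\begin{itemize}
\item a pendant \(p_v\) with \(v\in D\) sees \(f(v)=2\);
\item a vertex \(v\notin D\) has \(f(p_v)=1\) and also a neighbour in \(D\) of value \(2\), so its neighbourhood sum is at least \(3\).
\end{itemize}
So \(f\) is a Roman-\(\{2\}\) dominating function.

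\textbf{Lower bound.} Take a minimum Roman-\(\{2\}\) dominating function \(f\) of \(G'\) and normalize each pair \((v,p_v)\). Since \(N(p_v)=\{v\}\), either \(f(p_v)\ge 1\) or \(f(v)=2\). In particular, \(f(v)+f(p_v)\ge 1\), and the pair \((1,0)\) is impossible. The normalization steps are:
\begin{itemize}
\item If \(f(p_v)=2\), or if \(f(p_v)\ge1\) and \(f(v)\ge1\), replace the pair by \(f(v)=2\), \(f(p_v)=0\). This does not increase the weight.
\item It preserves feasibility: \(p_v\) is now covered by \(v\), and raising \(f(v)\) only helps the other neighbours of \(v\). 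Neighbours of \(p_v\) other than \(v\) do not exist.
\end{itemize}
After normalization every pair is either \((2,0)\) or \((0,1)\). Let \(D=\{v:f(v)=2\}\); the weight is at least \(n+|D|\). Take \(v\notin D\). Then \(f(v)=0\), and its neighbourhood sum is at least \(2\), while \(p_v\) contributes only \(1\). Every original vertex has value \(0\) or \(2\) and every other pendant is not adjacent to \(v\), so \(v\) must have a \(G\)-neighbour in \(D\). Thus \(D\) dominates \(G\), and \(\gamma(G)\le \gamma_{\{R2\}}(G')-n\).

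The main obstacle is the normalization step. One must check that every local replacement keeps all other vertices satisfied, which relies on the fact that the pendant's only neighbour is \(v\). One must also rule out the mixed pair \((1,0)\), which is infeasible as shown above. The remaining verifications, and the polynomial size of the construction, are routine.
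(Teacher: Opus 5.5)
Your proof is correct and follows essentially the same route as the paper: the same one-pendant-per-vertex construction with threshold \(n+k\), the same forward assignment, and the same pendant-based normalization in the converse. The only difference is minor: you normalize every pair \((v,p_v)\) all the way to \((2,0)\) or \((0,1)\) and take \(D=\{v:f(v)=2\}\), whereas the paper only eliminates \(f(p_v)=2\) and then takes \(D=\{v:f(v)>0\}\).
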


\begin{proof}
	The problem belongs to \(\mathsf{NP}\). We reduce from \textsc{Dominating Set} on chordal bipartite graphs, which is NP-complete~\cite{muller1987np}. Let \((G,k)\) be an instance of \textsc{Dominating Set}, where \(G=(A\cup B,E)\) is chordal bipartite, and let \(n=|V(G)|\). We construct a graph \(H\) from \(G\) as follows. For every vertex \(v\in V(G)\), add one new pendant vertex \(p_v\) adjacent only to \(v\); see Figure~\ref{fig:chordal-bipartite-roman-type-reductions}(a). If \(v\in A\), then put \(p_v\) in the \(B\)-side, and if \(v\in B\), then put \(p_v\) in the \(A\)-side. Thus, \(H\) is bipartite. Moreover, adding pendant vertices cannot create an induced cycle of length at least six, because no pendant vertex lies on a cycle. Hence, \(H\) is chordal bipartite. We set \(K=n+k\).
	
	Suppose first that \(G\) has a dominating set \(D\) of size at most \(k\). Define \(f:V(H)\rightarrow\{0,1,2\}\) as follows. For every \(v\in D\), set \(f(v)=2\) and \(f(p_v)=0\). For every \(v\notin D\), set \(f(v)=0\) and \(f(p_v)=1\). The weight is \(2|D|+(n-|D|)=n+|D|\leq n+k\). If \(p_v=0\), then \(v\in D\), and \(p_v\) has the neighbour \(v\) of value \(2\). If \(v\notin D\), then \(p_v\) has value \(1\), and since \(D\) dominates \(G\), the vertex \(v\) has a neighbor \(u\in D\). Thus \(v\) receives value \(1\) from \(p_v\) and value \(2\) from \(u\), and hence receives total neighbour value at least \(3\). Therefore, every zero vertex receives a total neighbour value of at least \(2\), and \(f\) is feasible.
	
	Conversely, suppose that \(H\) has a Roman-\(\{2\}\) dominating function \(f\) of weight at most \(n+k\). If \(f(p_v)=2\), then replace the values on \(\{v,p_v\}\) by \(f(v)=2\) and \(f(p_v)=0\). This does not increase the weight and preserves feasibility. Hence, we may assume that every pendant vertex has value \(0\) or \(1\).
	
	Now each pair \(\{v,p_v\}\) has total weight at least \(1\). Define \(D=\{v\in V(G):f(v)>0\}\). Every vertex in \(D\) contributes at least one unit above the baseline value \(1\), and therefore \(|D|\leq w(f)-n\leq k\). Let \(v\in V(G)\setminus D\). Then \(f(v)=0\). Since \(p_v\) is a pendant vertex and \(f(v)\neq 2\), we must have \(f(p_v)=1\). The vertex \(v\) must receive a total neighbour value of at least \(2\). The pendant \(p_v\) contributes exactly \(1\), so \(v\) must receive positive contribution from some original neighbor \(u\). Hence \(u\in D\). Therefore, \(D\) is a dominating set of \(G\) of size at most \(k\). This proves NP-hardness.
\end{proof}

\subsection{unique response Roman Domination}

\begin{theorem}
	\textsc{unique response Roman Domination} is NP-complete on chordal bipartite graphs.
\end{theorem}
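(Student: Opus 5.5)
We reduce from the problem of deciding whether a chordal bipartite graph has an efficient dominating set. By the work of Lu and Tang~\cite{luTang2002}, this problem is NP-complete. Membership in \(\mathsf{NP}\) was noted in Section~\ref{sec:preliminaries}.

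\textbf{Construction.} Let \(G=(A\cup B,E)\) be chordal bipartite with \(n=|V(G)|\). Build \(H\) as in Figure~\ref{fig:chordal-bipartite-roman-type-reductions}(b): for every \(v\in V(G)\), add two new pendant vertices \(p_v,q_v\) adjacent only to \(v\), placed on the side opposite to \(v\). As in the Roman-\(\{2\}\) proof, \(H\) is bipartite, and pendant vertices lie on no cycle, so \(H\) is chordal bipartite. Set \(K=2n\). The threshold does not depend on any size parameter: the point of the construction is that weight exactly \(2n\) is attainable \emph{only} by an efficient dominating set.

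\textbf{Lower bound.} The key step is that every URRDF \(f\) of \(H\) has \(w(f)\geq 2n\), with a precise description of equality. Partition \(V(H)\) into the triples \(T_v=\{v,p_v,q_v\}\) and show \(f(T_v)\geq 2\) for each \(v\).
\begin{itemize}
\item If \(f(v)=2\), this is immediate. Moreover, each pendant must then be \(0\): a pendant of value \(1\) or \(2\) would be a positive vertex with the value-\(2\) neighbour \(v\). So \(f(T_v)=2\) exactly.
\item If \(f(v)=1\), a pendant of value \(0\) would need its only neighbour \(v\) to have value \(2\). Hence both pendants are positive and \(f(T_v)\geq 3\).
\item If \(f(v)=0\), a pendant of value \(0\) would again need \(f(v)=2\). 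Hence both pendants are positive, \(f(T_v)\geq 2\), and equality forces \(f(p_v)=f(q_v)=1\).
\end{itemize}
Consequently \(w(f)\leq 2n\) forces \(w(f)=2n\). It also forces \(f(T_v)=2\) for every \(v\), so every original vertex has value \(0\) or \(2\), with pendants \(0\) in the first case and \(1\) in the second.

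\textbf{Backward direction.} Given such an \(f\), let \(D=\{v\in V(G):f(v)=2\}\).
\begin{itemize}
\item Two vertices of \(D\) cannot be adjacent, since a value-\(2\) vertex may not have a value-\(2\) neighbour. So every \(v\in D\) satisfies \(N_G[v]\cap D=\{v\}\).
\item Let \(v\notin D\), so \(f(v)=0\). Then \(v\) has exactly one value-\(2\) neighbour in \(H\). Its pendants have value \(1\), so this neighbour lies in \(N_G(v)\cap D\), giving \(|N_G[v]\cap D|=1\).
\end{itemize}
Hence \(D\) is an efficient dominating set of \(G\).

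\textbf{Forward direction.} Given an efficient dominating set \(D\) of \(G\), define \(f(v)=2\) and \(f(p_v)=f(q_v)=0\) for \(v\in D\), and \(f(v)=0\) and \(f(p_v)=f(q_v)=1\) for \(v\notin D\). We check the URRDF conditions.
\begin{itemize}
\item Pendants of value \(0\) have exactly one value-\(2\) neighbour, namely their \(D\)-vertex.
\item Pendants of value \(1\) have a neighbour of value \(0\), hence no value-\(2\) neighbour.
\item Vertices of \(D\) are pairwise nonadjacent by efficiency, so no value-\(2\) vertex has a value-\(2\) neighbour.
\item A vertex \(v\notin D\) has exactly one \(D\)-neighbour by efficiency, and its pendants have value \(1\), so it has exactly one value-\(2\) neighbour.
\end{itemize}
The weight is \(2|D|+2(n-|D|)=2n=K\).

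The main obstacle is the case analysis on the triples in the lower-bound step, in particular excluding \(f(v)=1\) and pendants of value \(2\). The two-pendant gadget is designed for this: it makes the case \(f(v)=1\) strictly more expensive, so minimum-weight solutions are confined to the values \(0\) and \(2\) on \(V(G)\).
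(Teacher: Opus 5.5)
Your proposal is correct and follows the paper's proof essentially verbatim: the same reduction from \textsc{Efficient Domination} via Lu and Tang, the same two-pendant construction with $K=2n$, and the same triple-weight argument. Your case analysis on $f(v)\in\{0,1,2\}$ fills in the paper's bare assertions that $f(T_v)\geq 2$ and that only two weight-$2$ configurations exist. One cosmetic slip: in your summary sentence the pendant values are swapped. An original vertex of value $0$ has pendants of value $1$, and one of value $2$ has pendants of value $0$, as your bullet analysis already states correctly.
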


\begin{proof}
	The problem belongs to \(\mathsf{NP}\). We reduce from \textsc{Efficient Domination} on chordal bipartite graphs, which is NP-complete~\cite{luTang2002}. Recall that \(D\subseteq V(G)\) is an efficient dominating set if every vertex \(v\in V(G)\) satisfies \(|N_G[v]\cap D|=1\).
	
	Let \(G=(A\cup B,E)\) be a chordal bipartite graph with \(n=|V(G)|\). We construct \(H\) by adding, for every \(v\in V(G)\), two pendant vertices \(p_v\) and \(q_v\), both adjacent only to \(v\); see Figure~\ref{fig:chordal-bipartite-roman-type-reductions}(b). If \(v\in A\), then \(p_v\) and \(q_v\) are placed in the \(B\)-side, and if \(v\in B\), then they are placed in the \(A\)-side. The graph \(H\) is chordal bipartite. We set \(K=2n\).
	
	If \(G\) has an efficient dominating set \(D\), define \(f(v)=2\) and \(f(p_v)=f(q_v)=0\) for every \(v\in D\), and define \(f(v)=0\) and \(f(p_v)=f(q_v)=1\) for every \(v\notin D\). Since \(D\) is efficient, every vertex outside \(D\) has exactly one neighbour in \(D\), and no two vertices of \(D\) are adjacent. Hence, every zero vertex has exactly one neighbour of value \(2\), and every vertex of value \(1\) or \(2\) has no neighbour of value \(2\). Thus \(f\) is a unique response Roman dominating function of weight \(2n\).
	
	Conversely, suppose that \(H\) has a unique response Roman dominating function \(f\) of weight at most \(2n\). For each \(v\in V(G)\), the triple \(\{v,p_v,q_v\}\) has weight at least \(2\). Hence, every triple has weight exactly \(2\). Therefore, either \(f(v)=2\) and \(f(p_v)=f(q_v)=0\), or \(f(v)=0\) and \(f(p_v)=f(q_v)=1\). Let \(D=\{v\in V(G):f(v)=2\}\). Since two vertices assigned value \(2\) cannot be adjacent in a unique response Roman dominating function, \(D\) is independent. If \(v\notin D\), then \(f(v)=0\), and its two pendant neighbors have value \(1\). Hence, \(v\) must have exactly one original neighbour in \(D\). Thus, every vertex outside \(D\) has exactly one neighbour in \(D\), and every vertex in \(D\) is dominated exactly once by itself. Therefore, \(D\) is an efficient dominating set of \(G\). This proves NP-hardness.
\end{proof}

\subsection{Perfect Roman Domination}

The remaining variant is Perfect Roman domination. The two-pendant construction used for unique response Roman domination is not sufficient here, because in a perfect Roman dominating function, two adjacent vertices are allowed to both receive value \(2\). Therefore, we add an edge gadget that charges one extra unit whenever both endpoints of an original edge are selected. This restores the independence condition needed to simulate efficient domination.

\begin{figure}[t]
	\centering
	
	\resizebox{0.75\linewidth}{!}{%
		\begin{tikzpicture}[vtx/.style={circle,draw,minimum size=7mm,inner sep=0pt},
			smallv/.style={circle,draw,minimum size=5.8mm,inner sep=0pt},
			pend/.style={circle,draw,minimum size=6.5mm,inner sep=0pt},
			oldedge/.style={thick,dashed},
			newedge/.style={thick},
			every node/.style={font=\small}
			]
			
			\node at (1.2,1.6) {(a) Vertex gadget};
			\node[vtx] (v) at (0,0) {\(v\)};
			\node[pend] (pv) at (2,0.55) {\(p_v\)};
			\node[pend] (qv) at (2,-0.55) {\(q_v\)};
			\draw[newedge] (v)--(pv);
			\draw[newedge] (v)--(qv);
			\node[align=center] at (1.0,-1.45) {two private\\pendant vertices};
			
			\node at (5.9,1.6) {(b) Penalty gadget};
			\node[vtx] (s) at (4,0) {\(s\)};
			\node[smallv] (z) at (5.55,0) {\(z\)};
			\node[smallv] (c) at (7.05,0) {\(c\)};
			\node[pend] (l1) at (8.65,0.8) {\(\ell_1\)};
			\node[pend] (l2) at (8.65,0) {\(\ell_2\)};
			\node[pend] (l3) at (8.65,-0.8) {\(\ell_3\)};
			\draw[newedge] (s)--(z);
			\draw[newedge] (z)--(c);
			\draw[newedge] (c)--(l1);
			\draw[newedge] (c)--(l2);
			\draw[newedge] (c)--(l3);
			\node[align=center] at (6.5,-1.55) {cost \(2\) if \(s\neq 2\),\\cost \(3\) if \(s=2\)};
			
			\draw[dotted,thick] (-0.4,-2.35)--(9.4,-2.35);
			
			\node at (4.5,-2.9) {(c) Edge gadget for \(e=uv\), where \(u\in A\) and \(v\in B\)};
			
			\node at (1.5,-3.55) {\(A\)-side};
			\node at (7.5,-3.55) {\(B\)-side};
			
			\node[vtx] (u) at (0,-5.0) {\(u\)};
			\node[vtx] (v2) at (9.0,-5.0) {\(v\)};
			
			\node[smallv] (alpha) at (3.0,-4.15) {\(\alpha_e\)};
			\node[smallv] (beta) at (3.0,-5.85) {\(\beta_e\)};
			\node[smallv] (gamma) at (6.0,-4.15) {\(\gamma_e\)};
			\node[smallv] (delta) at (6.0,-5.85) {\(\delta_e\)};
			
			\draw[oldedge] (u) to[out=35,in=145] (v2);
			
			\draw[newedge] (u)--(gamma);
			\draw[newedge] (alpha)--(v2);
			\draw[newedge] (alpha)--(gamma);
			\draw[newedge] (alpha)--(delta);
			\draw[newedge] (beta)--(gamma);
			\draw[newedge] (beta)--(delta);
			
			\node[smallv] (za) at (1.55,-7.05) {\(z_\alpha\)};
			\node[smallv] (ca) at (1.55,-8.15) {\(c_\alpha\)};
			\node[pend] (la1) at (0.75,-9.05) {};
			\node[pend] (la2) at (1.55,-9.05) {};
			\node[pend] (la3) at (2.35,-9.05) {};
			
			\draw[newedge] (alpha)--(za);
			\draw[newedge] (za)--(ca);
			\draw[newedge] (ca)--(la1);
			\draw[newedge] (ca)--(la2);
			\draw[newedge] (ca)--(la3);
			\node[align=center] at (0.25,-7.65) {penalty\\for \(\alpha_e\)};
			
			\node[smallv] (zg) at (7.45,-7.05) {\(z_\gamma\)};
			\node[smallv] (cg) at (7.45,-8.15) {\(c_\gamma\)};
			\node[pend] (lg1) at (6.65,-9.05) {};
			\node[pend] (lg2) at (7.45,-9.05) {};
			\node[pend] (lg3) at (8.25,-9.05) {};
			
			\draw[newedge] (gamma)--(zg);
			\draw[newedge] (zg)--(cg);
			\draw[newedge] (cg)--(lg1);
			\draw[newedge] (cg)--(lg2);
			\draw[newedge] (cg)--(lg3);
			\node[align=center] at (8.95,-7.65) {penalty\\for \(\gamma_e\)};
			
	\end{tikzpicture}}
	\caption{Gadgets used in the reduction for Perfect Roman domination. The vertex gadget is attached to every original vertex. The penalty gadget is attached to every incidence \((v,e)\), and also to the internal vertices \(\alpha_e\) and \(\gamma_e\) of each edge gadget. The dashed edge \(uv\) is the original edge of \(G\); all other edges shown in the edge gadget are newly added.}
	\label{fig:perfect-roman-chordal-bipartite-reduction}
\end{figure}

\begin{theorem}\label{thm:perfect-roman-chordal-bipartite}
	\textsc{Perfect Roman Domination} is NP-complete on chordal bipartite graphs.
\end{theorem}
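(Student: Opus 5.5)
The plan is to reduce from \textsc{Efficient Domination} on chordal bipartite graphs, NP-complete by~\cite{luTang2002} as in the unique response case, using the gadgets of Figure~\ref{fig:perfect-roman-chordal-bipartite-reduction}. Membership in $\mathsf{NP}$ is immediate. Recall that all efficient dominating sets of a graph have the same size, namely $\gamma(G)$. So I will reduce from the question ``does $G$ have an efficient dominating set of size $k$'' and allow the threshold $K$ to depend on $k$, on $n=|V(G)|$ and on $|E(G)|$.

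\emph{Construction and structure.} Given $G=(A\cup B,E)$, I build $H$ as follows.
\begin{itemize}
\item Attach the vertex gadget (pendants $p_v,q_v$) to every $v\in V(G)$.
\item Attach one penalty gadget $s$--$z$--$c$--$\{\ell_1,\ell_2,\ell_3\}$ to every incidence $(v,e)$, with $s=v$.
\item For every edge $e=uv$ with $u\in A$, add the edge gadget on $\alpha_e,\beta_e,\gamma_e,\delta_e$.
\item Attach a penalty gadget to each of $\alpha_e$ and $\gamma_e$.
\end{itemize}
Bipartiteness follows by placing each new vertex on the side opposite its neighbours.

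For chordal bipartiteness, I would note that every penalty and pendant vertex lies on no cycle. Hence any induced cycle of length at least $6$ lies in $G$ together with the edge gadgets. Inside one edge gadget, the only cycles are the $4$-cycles $u\gamma_e\alpha_e v$ (using the original edge $uv$) and $\alpha_e\gamma_e\beta_e\delta_e$. A long cycle leaving a gadget through $u$ and $v$ can be short-cut by the chord $uv$. So any induced long cycle would project onto an induced long cycle of $G$, which does not exist. I would formalize this by showing that $\gamma_e,\alpha_e,\beta_e,\delta_e$ can be removed via bisimplicial edges (a perfect edge-without-vertex elimination ordering) before reaching $G$.

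\emph{Local costs.} I would first prove the local lemma already annotated in the figure. In any perfect Roman dominating function (PRDF), each penalty gadget, excluding $s$, costs at least $2$. It costs at least $3$ if $f(s)=2$, since $z=0$ would then see two value-$2$ neighbours. Both bounds are attained, with $c=2$ and $z=0$ or $z=1$. Likewise, each vertex gadget triple $\{v,p_v,q_v\}$ costs at least $2$, with equality only for $(2,0,0)$ or $(0,1,1)$ after the usual replacement argument.

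Next, I would analyse the edge gadget together with its two penalties. The intended behaviour is:
\begin{itemize}
\item when at most one of $u,v$ has value $2$, the gadget can be completed at a fixed baseline cost $\beta_0$ (e.g.\ one of $\alpha_e,\gamma_e$ gets $2$ so that $\beta_e,\delta_e$ are perfectly dominated);
\item when both $u,v$ have value $2$, at least one extra unit is forced.
\end{itemize}
Together with the incidence penalties, this yields an exact accounting of the form
\[
w(f)\ \ge\ W_0+\sum_{v:\,f(v)=2}\deg_G(v)+(\text{number of edges with both ends valued }2),
\]
with equality achievable. Here $W_0=2n+2\cdot 2|E|+\beta_0|E|$ counts baseline costs.

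\emph{Forward direction.} Given an efficient dominating set $D$ with $|D|=k$, assign $2$ to $D$, pendants $1,1$ elsewhere, and complete the gadgets. Each non-$D$ vertex has exactly one value-$2$ neighbour, $D$ is independent, and $\sum_{v\in D}\deg(v)=n-k$. So the weight is $K:=W_0+n-k$.

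\emph{Converse.} From a PRDF of weight at most $K$, normalize the triples, set $D=\{v:f(v)=2\}$, and use the accounting inequality. The perfectness at each $v\notin D$ (its pendants have value $1$, so $f(v)=0$ and $v$ needs exactly one value-$2$ neighbour, which must be original unless the edge gadget supplies one) shows that $D$ dominates every vertex exactly once from outside. Counting then gives $\sum_{v\in D}\deg(v)\ge n-|D|$, and the weight bound forces no adjacent pair in $D$. Hence $D$ is efficient.

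\emph{Main obstacle.} The hardest step is the edge-gadget lemma. I must show that $\gamma_e$ and $\alpha_e$, which are adjacent to $u$ and $v$ respectively, can never supply a value-$2$ neighbour to $u$ or $v$ more cheaply than the baseline; otherwise the reduction might dominate original vertices through gadgets. I would handle this by case analysis on $(f(\alpha_e),f(\gamma_e))$, using the penalty lemma to show that setting either to $2$ while $u$ or $v$ relies on it costs strictly more than $\beta_0$. I would tune the threshold $K$ (and, if needed, the number of leaves or penalty copies) so that any such deviation exceeds the slack.
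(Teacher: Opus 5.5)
There is a genuine gap, and it sits in the edge-gadget lemma you postpone. The core $\{\alpha_e,\beta_e,\gamma_e,\delta_e\}$ together with its two penalty gadgets has \emph{no} fixed baseline cost, and your suggested completion does not work. Putting $2$ on $\alpha_e$ or $\gamma_e$ either gives a second value-$2$ neighbour to an endpoint that is already dominated by $D$ (illegal), or costs at least $8$.

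The actual local minima are as follows.
\begin{itemize}
\item $7$ when neither endpoint has value $2$: take $f(\gamma_e)=1$, $f(\delta_e)=2$, and penalties $2+2$.
\item $6$ when exactly one endpoint has value $2$. If $f(u)=2$, then $u$ itself dominates $\gamma_e$ and $f(\delta_e)=2$ suffices; if $f(v)=2$, use $f(\beta_e)=2$.
\item $6$ when both endpoints have value $2$.
\end{itemize}
So the local cost is $7-s_u-s_v+s_us_v$. This exactly cancels the surcharge $s_u+s_v$ of the two incidence penalties, and every edge costs at least $11+s_us_v$. Consequently the term $\sum_{f(v)=2}\deg_G(v)$ in your inequality does not survive: no fixed $\beta_0$ makes that inequality both valid and tight. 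The correct threshold is $K=2n+11m$, which does not depend on $k$, and the fact that all efficient dominating sets have the same size is never needed.

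Your threshold $K=W_0+n-k$ really does break the reduction.
\begin{itemize}
\item With $\beta_0=7$ there is slack $n-k$. Take $G=C_4=abcd$ with $a,c\in A$, which has no efficient dominating set. Give $2$ to $a$, and dominate $c$ through the gadget of $e=cb$ with $f(\gamma_e)=2$, $f(\delta_e)=1$. This is a perfect Roman dominating function of weight $2n+11m+1=53\le 55=K$ for $k=1$.
\item With $\beta_0=6$, $K$ equals $2n+11m$ minus the number of edges missing the efficient dominating set. This falls below the universal lower bound $2n+11m$, so the forward direction fails (e.g.\ $G=P_4$).
\item Even granting your accounting, the step ``the weight bound forces no adjacent pair in $D$'' only yields $3|E(G[D])|\le |D|-k$. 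That needs a separate bound on $|D|$, which you do not have.
\end{itemize}

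The missing idea is that with $K=2n+11m$ the weight bound forces \emph{every} local lower bound to be tight, and tightness is what removes your ``main obstacle.'' A value $2$ on $z_{v,e}$ next to a zero vertex $v$ makes that penalty cost at least $4$. A value $2$ on $\gamma_e$ (resp.\ $\alpha_e$) next to a zero endpoint makes the edge part cost at least $8$. Both are strictly above the minimum. Hence the number of edges with both ends valued $2$ is $0$, and every triple is $(2,0,0)$ or $(0,1,1)$. Moreover, the unique value-$2$ neighbour of each $v\notin D$ is an original neighbour, so $D$ is efficient. No tuning of leaves or penalty copies is required.
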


\begin{proof}
	The problem belongs to \(\mathsf{NP}\). We reduce from \textsc{Efficient Domination} on chordal bipartite graphs, which is NP-complete~\cite{luTang2002}. Let \(G=(A\cup B,E)\) be a chordal bipartite graph, let \(n=|V(G)|\), and let \(m=|E(G)|\). We construct a graph \(H\) as follows.
	
	For every vertex \(v\in V(G)\), add two private pendant vertices \(p_v\) and \(q_v\), both adjacent only to \(v\). For every incidence of a vertex \(v\) with an edge \(e\in E(G)\), attach a penalty gadget to \(v\). This gadget consists of two vertices \(z_{v,e}\) and \(c_{v,e}\), together with three private leaves adjacent to \(c_{v,e}\), and with edges \(vz_{v,e}\), \(z_{v,e}c_{v,e}\), and the three leaf edges incident with \(c_{v,e}\). Finally, for every edge \(e=uv\in E(G)\), where \(u\in A\) and \(v\in B\), add two vertices \(\alpha_e,\beta_e\) to the \(A\)-side and two vertices \(\gamma_e,\delta_e\) to the \(B\)-side, and add the edges \(u\gamma_e\), \(\alpha_ev\), \(\alpha_e\gamma_e\), \(\alpha_e\delta_e\), \(\beta_e\gamma_e\), and \(\beta_e\delta_e\). We also attach one penalty gadget to \(\alpha_e\) and one penalty gadget to \(\gamma_e\). The construction is shown in Figure~\ref{fig:perfect-roman-chordal-bipartite-reduction}.
	
	The graph \(H\) is bipartite by construction. Pendant vertices and penalty gadgets are trees attached to existing vertices, so they cannot create an induced cycle. For each edge gadget, any long cycle using internal vertices is chorded either by the original edge \(uv\) or by the internal edge \(\alpha_e\gamma_e\). Since \(G\) is chordal bipartite, no induced cycle of length at least six is created. Hence \(H\) is chordal bipartite. We set \(K=2n+11m\).
	
	Suppose first that \(G\) has an efficient dominating set \(D\). For every \(v\in D\), set \(f(v)=2\) and \(f(p_v)=f(q_v)=0\). For every \(v\notin D\), set \(f(v)=0\) and \(f(p_v)=f(q_v)=1\). In every incidence penalty gadget attached to \(v\), set \(f(c_{v,e})=2\), set the three private leaves of \(c_{v,e}\) to \(0\), and set \(f(z_{v,e})=1\) if \(v\in D\), while \(f(z_{v,e})=0\) if \(v\notin D\).
	
	For an edge \(e=uv\), since \(D\) is independent, the case \(u,v\in D\) does not occur. If \(u\notin D\) and \(v\notin D\), set \(f(\alpha_e)=0\), \(f(\beta_e)=0\), \(f(\gamma_e)=1\), and \(f(\delta_e)=2\). If \(u\in D\) and \(v\notin D\), set \(f(\alpha_e)=0\), \(f(\beta_e)=0\), \(f(\gamma_e)=0\), and \(f(\delta_e)=2\). If \(u\notin D\) and \(v\in D\), set \(f(\alpha_e)=0\), \(f(\beta_e)=2\), \(f(\gamma_e)=0\), and \(f(\delta_e)=0\). The penalty gadgets attached to \(\alpha_e\) and \(\gamma_e\) are assigned with their center vertices receiving value \(2\), their three private leaves receiving value \(0\), and their intermediate vertices receiving value \(0\). One checks directly that every zero vertex has exactly one neighbour of value \(2\). Thus, \(f\) is a perfect Roman dominating function.
	
	The vertex gadgets contribute \(2n\). For each edge \(e=uv\), the two incidence penalty gadgets contribute \(4\), plus one extra unit for each endpoint of \(e\) that belongs to \(D\). The four core vertices \(\alpha_e,\beta_e,\gamma_e,\delta_e\) contribute \(3\) if neither endpoint is in \(D\), and \(2\) if exactly one endpoint is in \(D\). The two penalty gadgets attached to \(\alpha_e\) and \(\gamma_e\) contribute \(4\). Since \(D\) is independent, the extra incidence cost is exactly cancelled by the saving in the core edge gadget. Hence each original edge contributes \(11\), and \(w(f)=2n+11m=K\).
	
	Conversely, suppose that \(H\) has a perfect Roman dominating function \(f\) of weight at most \(K\). For every original vertex \(v\), the triple \(\{v,p_v,q_v\}\) has weight at least \(2\). Also, a penalty gadget attached to a support vertex \(s\) has weight at least \(2\) if \(f(s)\neq 2\), and at least \(3\) if \(f(s)=2\). For an original edge \(e=uv\), let \(s_u=1\) if \(f(u)=2\), and let \(s_v=1\) if \(f(v)=2\). The two incidence penalty gadgets attached to \(u\) and \(v\) contribute at least \(4+s_u+s_v\). The four core vertices \(\alpha_e,\beta_e,\gamma_e,\delta_e\), together with the two penalty gadgets attached to \(\alpha_e\) and \(\gamma_e\), contribute at least \(7-s_u-s_v+s_us_v\). This is a direct four-case check: the minimum local cost is \(7\) when neither endpoint is assigned value \(2\), \(6\) when exactly one endpoint is assigned value \(2\), and \(6\) when both endpoints are assigned value \(2\).
	
	Therefore each original edge \(e=uv\) contributes at least \(11+s_us_v\). Summing over all edges and adding the vertex-gadget contribution, we get \(w(f)\geq 2n+11m+\ell\), where \(\ell\) is the number of original edges \(uv\in E(G)\) with \(f(u)=f(v)=2\). Since \(w(f)\leq K=2n+11m\), we have \(\ell=0\). Thus, no two adjacent original vertices are both assigned value \(2\).
	
	Moreover, equality forces every vertex gadget to have weight exactly \(2\). Hence each original vertex \(v\) is in one of two forms: either \(f(v)=2\) and \(f(p_v)=f(q_v)=0\), or \(f(v)=0\) and \(f(p_v)=f(q_v)=1\). Let \(D=\{v\in V(G):f(v)=2\}\). Since no two adjacent original vertices are both assigned value \(2\), \(D\) is independent. If \(v\notin D\), then \(f(v)=0\), its private pendants have value \(1\), and equality in the auxiliary gadgets implies that no auxiliary neighbour of \(v\) is assigned value \(2\). Therefore the only possible value-\(2\) neighbors of \(v\) are original neighbours of \(v\) in \(G\). Since \(f\) is a perfect Roman dominating function, \(v\) has exactly one such neighbour. Hence, every vertex outside \(D\) has exactly one neighbour in \(D\), and every vertex in \(D\) is dominated exactly once by itself. Therefore, \(D\) is an efficient dominating set of \(G\).
	
	Thus, \(G\) has an efficient dominating set if and only if \(H\) has a perfect Roman dominating function of weight at most \(K\). The construction is polynomial and preserves chordal bipartiteness. Hence \textsc{Perfect Roman Domination} is NP-complete on chordal bipartite graphs.
\end{proof}

\section{Conclusion}\label{sec:conclusion}

In this paper, we studied four Roman-type domination problems on two important subclasses of bipartite graphs. For convex bipartite graphs, we developed a unified dynamic programming framework for Roman-\(\{2\}\) domination, double Roman domination, perfect Roman domination, and unique response Roman domination. The algorithms exploit a convex ordering of one bipartition class and process the graph from left to right. The main structural idea is that, after each processing step, the unfinished requirements of the processed vertices can be represented by a constant number of boundary indices. This compact representation yields \(O(n^6)\)-time algorithms for all four parameters.

We also established hardness results for chordal bipartite graphs. In particular, we proved that Roman-\(\{2\}\) domination, perfect Roman domination, and unique response Roman domination are NP-complete on this graph class. These results show that excluding induced cycles of length at least six is not sufficient to make these Roman-type domination problems tractable. Consequently, our results reveal a clear algorithmic contrast between convex bipartite graphs, where the interval structure supports polynomial-time dynamic programming, and the broader class of chordal bipartite graphs, where several of the corresponding problems remain computationally hard.

Several questions remain open. First, it would be interesting to improve the \(O(n^6)\) running times of the convex-bipartite algorithms, possibly by reducing the number of state coordinates, eliminating unreachable or equivalent states, or exploiting additional monotonicity in the dynamic programming tables. Since biconvex bipartite graphs and bipartite permutation graphs are subclasses of convex bipartite graphs, the algorithms presented in this paper already apply to them. Nevertheless, their stronger ordering properties may permit substantially faster algorithms.

Second, the complexity of these Roman-type domination problems deserves further investigation on graph classes that properly extend, or are not contained in, the class of convex bipartite graphs, such as circular-convex bipartite graphs, triad-convex bipartite graphs, and other restricted subclasses of tree-convex bipartite graphs.  More generally, the results suggest the broader problem of identifying the precise structural boundary between bipartite graph classes on which Roman-type domination problems are polynomial-time solvable and those on which they remain NP-complete.

\bibliographystyle{plain}
\bibliography{RTD_ALGO_ref}

\end{document}